\tikzset{
blackvertex/.style={circle, draw=black!100,fill=black!100,thick, inner sep=0pt, minimum size=2mm}
}
\tikzset{
whitevertex/.style={circle, draw=black!100,thick, inner sep=0pt, minimum size=2mm}
}
\newcommand{\M}{EM}
\DeclareMathOperator{\EM}{dem}
\DeclareMathOperator{\SEM}{sdem}
\DeclareMathOperator{\VC}{vc}
\DeclareMathOperator{\FES}{fes}
\DeclareMathOperator{\FVS}{fvs}
\DeclareMathOperator{\arb}{arb}
\newcommand{\decisionpb}[4]{
\begin{center}
        \noindent\framebox{\begin{minipage}{#4\textwidth}
                #1\\
                Instance: #2\\ 
                Question: #3
        \end{minipage}}
\end{center}
}
\newcommand{\optimpb}[5]{
\begin{center}
        \noindent\framebox{\begin{minipage}{#4\textwidth}
                #1\\
                Instance: #2\\ 
                Task: #3
        \end{minipage}}
\end{center}
}
\newcommand{\EMdec}{\textsc{Distance-Edge-Monitoring Set}\xspace}
\newcommand{\EMopt}{\textsc{Min Distance-Edge-Monitoring Set}\xspace}
\newcommand{\SCdec}{\textsc{Set Cover}\xspace}
\newcommand{\SCopt}{\textsc{Min Set Cover}\xspace}
\newtheorem{theorem}{Theorem}
\newtheorem{proposition}[theorem]{Proposition}
\newtheorem{lemma}[theorem]{Lemma}
\newtheorem{observation}[theorem]{Observation}
\newtheorem{corollary}[theorem]{Corollary}
\newtheorem{definition}[theorem]{Definition}
\begin{document}
\sloppy

\title{Monitoring the edges of a graph using distances\thanks{This paper is dedicated to the memory of Mirka Miller, who sadly passed away in January 2016. This research was started when Mirka Miller and Joe Ryan visited Ralf Klasing at the LaBRI, University of Bordeaux, in April 2015.}~\thanks{A preliminary version of this paper appeared as~\cite{CALDAM}.}~\thanks{The authors acknowledge the financial support from the ANR project HOSIGRA~(ANR-17-CE40-0022), the IFCAM project ``Applications of graph homomorphisms''~(MA/IFCAM/18/39), the Programme IdEx Bordeaux -- SysNum (ANR-10-IDEX-03-02) and the French government IDEX-ISITE initiative 16-IDEX-0001 (CAP 20-25).}}

\author{Florent Foucaud\footnote{\noindent LIMOS, CNRS UMR 6158, Universit\'e Clermont Auvergne, Aubi\`ere, France.}~~\footnote{\noindent Universit\'e de Bordeaux, Bordeaux INP, CNRS, LaBRI, UMR 5800, Talence, France.}~~\footnote{Univ. Orléans, INSA Centre Val de Loire, LIFO EA 4022, F-45067 Orléans Cedex 2, France.}
    \and Shih-Shun Kao\footnotemark[5]~~\footnote{Department of Computer Science and Information Engineering, National Cheng Kung University, Tainan 701, Taiwan.}
  \and Ralf Klasing\footnotemark[5]
  \and Mirka Miller\footnote{School of Mathematical and Physical Sciences, The University of Newcastle, Australia and Department of Mathematics, University of West Bohemia, Czech Republic.}
  \and Joe Ryan\footnote{School of Electrical Engineering and Computing, The University of Newcastle, Newcastle, Australia.}
}

\maketitle

\begin{abstract}
  We introduce a new graph-theoretic concept in the area of network monitoring. A set $M$ of vertices of a graph $G$ is a \emph{distance-edge-monitoring set} if for every edge $e$ of $G$, there is a vertex $x$ of $M$ and a vertex $y$ of $G$ such that $e$ belongs to all shortest paths between $x$ and $y$. We denote by $\EM(G)$ the smallest size of such a set in $G$. The vertices of $M$ represent distance probes in a network modeled by $G$; when the edge $e$ fails, the distance from $x$ to $y$ increases, and thus we are able to detect the failure. It turns out that not only we can detect it, but we can even correctly locate the failing edge.

  In this paper, we initiate the study of this new concept. We show that for a nontrivial connected graph $G$ of order $n$, $1\leq\EM(G)\leq n-1$ with $\EM(G)=1$ if and only if $G$ is a tree, and $\EM(G)=n-1$ if and only if it is a complete graph. We compute the exact value of $\EM$ for grids, hypercubes, and complete bipartite graphs.

  Then, we relate $\EM$ to other standard graph parameters. We show that $\EM(G)$ is lower-bounded by the arboricity of the graph, and upper-bounded by its vertex cover number. It is also upper-bounded by twice its feedback edge set number.
Moreover, we characterize connected graphs $G$ with $\EM(G)=2$.

  Then, we show that determining $\EM(G)$ for an input graph $G$ is an NP-complete problem, even for apex graphs. There exists a polynomial-time logarithmic-factor approximation algorithm, however it is NP-hard to compute an asymptotically better approximation, even for bipartite graphs of small diameter and for bipartite subcubic graphs. For such instances, the problem is also unlikey to be fixed parameter tractable when parameterized by the solution size.
\vspace*{9mm}

\end{abstract}

\section{Introduction}

The aim of this paper is to introduce a new concept of network monitoring using distance probes, called \emph{distance-edge-monitoring}. Our networks are naturally modeled by finite undirected simple connected graphs, whose vertices represent computers and whose edges represent connections between them. We wish to be able to monitor the network in the sense that when a connection (an edge) fails, we can detect this failure. We will select a (hopefully) small set of vertices of the network, that will be called \emph{probes}. At any given moment, a probe of the network can measure its graph distance to any other vertex of the network. Our goal is that, whenever some edge of the network fails, one of the measured distances changes, and thus the probes are able to detect the failure of any edge.

Probes that measure distances in graphs are present in real-life networks, for instance this is useful in the fundamental task of \emph{routing}~\cite{DABVV06,GT00}. They are also frequently used for problems concerning \emph{network verification}~\cite{BBDGKP15,BEEHHMR06,BEMW10}.

We will now proceed with the formal definition of our main concept. In this paper, by \emph{graphs} we refer to connected simple graphs (without multiple edges and loops). A graph with loops or multiple edges is called a \emph{multigraph}.

We denote by $d_G(x,y)$ the distance between two vertices $x$ and $y$ in a graph $G$. 
For an edge $e$ of $G$, we denote by $G-e$ the graph obtained by deleting $e$ from $G$. 

\begin{definition}
  For a set $M$ of vertices and an edge $e$ of a graph $G$, let $P(M,e)$ be the set of pairs $(x,y)$ with $x$ a vertex of $M$ and $y$ a vertex of $V(G)$ such that $d_G(x,y)\neq d_{G-e}(x,y)$. In other words, $e$ belongs to all shortest paths between $x$ and $y$ in $G$.

  For a vertex $x$, let $\M(x)$ be the set of edges $e$ such that there exists a vertex $v$ in $G$ with $(x,v)\in P(\{x\},e)$. If $e\in \M(x)$, we say that $e$ is \emph{monitored} by $x$.

  A set $M$ of vertices of a graph $G$ is \emph{distance-edge-monitoring} if every edge $e$ of $G$ is monitored by some vertex of $M$, that is, the set $P(M,e)$ is nonempty. Equivalently, $\bigcup_{x\in M} \M(x)=E(G)$. 

  We denote by $\EM(G)$ the smallest size of a distance-edge-monitoring set of $G$.
\end{definition}

Note that $V(G)$ is always a distance-edge-monitoring set of $G$, so $\EM(G)$ is always well-defined.

Consider a graph $G$ modeling a network, and a set $M$ of vertices of $G$, on which we place probes that are able to measure their distances to all the other vertices. If $M$ is distance-edge-monitoring, if a failure occurs on any edge of the network (in the sense that the communication between its two endpoints is broken), then this failure is detected by the probes.

In fact, it turns out that not only the probes can \emph{detect} a failing edge, but they can also precisely \emph{locate} it (the proof of the following is delayed to Section~\ref{sec:prelim}).

\begin{proposition}\label{prop:location}
Let $M$ be a distance-edge-monitoring set of a graph $G$. Then, for any two distinct edges $e$ and $e'$ in $G$, we have $P(M,e)\neq P(M,e')$.
\end{proposition}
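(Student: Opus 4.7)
The plan is to argue by contradiction: assume there exist distinct edges $e$ and $e'$ with $P(M,e)=P(M,e')$, and construct a pair $(x,c)$ that monitors exactly one of them. Since $M$ is a distance-edge-monitoring set, $P(M,e)$ is nonempty, so I can pick a pair $(x,y)\in P(M,e)=P(M,e')$ with $x\in M$. By the definition (and the reformulation noted in the paper), $e$ and $e'$ both lie on every shortest $x$--$y$ path in $G$. Fix one such path $\pi\colon x=v_0,v_1,\dots,v_d=y$ and, without loss of generality, assume $e=v_iv_{i+1}$ and $e'=v_jv_{j+1}$ appear in this order along $\pi$, with $i<j$.

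The candidate witness pair will be $(x,c)$ where $c=v_j$, the endpoint of $e'$ that is closer to $x$ along $\pi$. The first half of the argument is easy: because $d_G(x,c)=j<j+1=d_G(x,v_{j+1})$, the edge $e'$ cannot lie on any shortest $x$--$c$ path (otherwise the other endpoint $v_{j+1}$ would be reachable in $j-1$ steps). Hence $d_{G-e'}(x,c)=d_G(x,c)$ and so $(x,c)\notin P(M,e')$.

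The heart of the proof is showing $(x,c)\in P(M,e)$, i.e.\ that every shortest $x$--$c$ path uses $e$. The idea is a standard concatenation: given a shortest $x$--$c$ path $Q$, concatenate it with the suffix $v_j,v_{j+1},\dots,v_d$ of $\pi$. This yields an $x$--$y$ walk of length $d_G(x,c)+(d-j)=d_G(x,y)$, which, being a shortest walk, is necessarily a simple path. By hypothesis, this shortest $x$--$y$ path must contain $e$; since the suffix from $v_j$ onward avoids $e=v_iv_{i+1}$ (because $i+1\le j$), the edge $e$ must lie in $Q$. Therefore $(x,c)\in P(M,e)\setminus P(M,e')$, contradicting the assumption.

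The only subtlety to keep an eye on is the degenerate geometry of $e$ and $e'$: they may share a vertex (the case $i+1=j$, so $e$ and $e'$ are consecutive on $\pi$), and $x$ could itself be an endpoint of $e$. In both situations the indices still satisfy $0\le i<j<d$, so $c=v_j\neq x$ and the suffix of $\pi$ avoids $e$, so the argument above goes through unchanged. No other obstacle arises.
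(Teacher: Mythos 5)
Your proof is correct and follows essentially the same route as the paper's: pick a pair $(x,y)\in P(M,e)=P(M,e')$, order $e$ and $e'$ along a shortest $x$--$y$ path, and exhibit an intermediate vertex whose pair with $x$ lies in exactly one of the two sets. The only cosmetic difference is that you take as witness the endpoint of the farther edge $e'$ nearest to $x$, whereas the paper takes the endpoint of the nearer edge $e$ nearest to $y$; your write-up also makes explicit the concatenation argument that the paper leaves implicit.
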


Thus, assume that we have placed probes on a distance-edge-monitoring set $M$ of a network $G$ and initially computed all the sets $P(M,e)$. In the case a unique edge of the network has failed, Proposition~\ref{prop:location} shows that by measuring the set of pairs $(x,y)$ with $x\in M$ and $y\in V(G)$ whose distance has changed, we know exactly which is the edge that has failed.

We define the decision and optimization problem associated to distance-edge-monitoring sets.

\decisionpb{\EMdec}{A graph $G$, an integer $k$.}{Do we have $\EM(G)\leq k$?}{0.9}

\optimpb{\EMopt}{A graph $G$.}{Build a smallest possible distance-edge-monitoring set of $G$.}{0.9}

\paragraph{Related notions.} A weaker model is studied in~\cite{BBDGKP15,BEEHHMR06} as a \emph{network discovery} problem, where we seek a set $S$ of vertices such that for each edge $e$, there exists a vertex $x$ of $S$ and a vertex $y$ of $G$ such that $e$ belongs to \emph{some} shortest path from $x$ to $y$.

A different (also weaker) model is the \emph{Link Monitoring problem} studied in~\cite{BR06}, in which one seeks to monitor the edges of a graph network by selecting vertices to act as probes. To each probe is assigned a routing tree (a DFS tree spanning the whole graph), and it is essentially required that each edge of the graph belongs to one of the trees.


Distance-edge-monitoring sets are also related to \emph{resolving sets} and \emph{edge-resolving sets}, that model sets of sensors that can measure the distance to all other vertices in a graph. A resolving set is a set $R$ of vertices such that for any two distinct vertices $x$ and $y$ in $G$, there is a vertex $r$ in $R$ such that $d_G(r,x)\neq d_G(r,y)$. The smallest size of a resolving set in $G$ is the \emph{metric dimension} of $G$~\cite{HM76,S75}. If instead, the set $R$ distinguishes the edges of $G$ (that is, for any pair $e,e'$ of edges of $G$, there is a vertex $r\in R$ with $d_G(x,e)\neq d_G(x,e')$, where for $e=uv$, $d_G(x,e)=\min\{d_G(x,u),d_G(x,v)\}$), we have an edge-resolving set~\cite{edgeMD}. 
A set that distinguishes all vertices and edges is a \emph{mixed resolving set}~\cite{mixedMD}.
Note that there is no relation, in general, between the (edge-)metric dimension of $G$, and $\EM(G)$, as shown by the examples of trees and grids: trees can have arbitrarily large metric dimension and edge-metric dimension~\cite{edgeMD}, but when $G$ is a tree $\EM(G)=1$ (Theorem~\ref{thm:trees}). Conversely, $\EM(G)$ can be arbitrarily large for grids (Theorem~\ref{thm:grids}), while their edge-metric dimension and metric dimension is $2$~\cite{edgeMD}.

Another related concept is the one of \emph{strong resolving sets} \cite{strong-resolving,ST04}: a set $R$ of vertices is strongly resolving if for any pair $x,y$ of vertices, there exists a vertex $z$ of $R$ such that either $x$ is on a shortest path from $z$ to $y$, or $y$ is on a shortest path from $z$ to $x$. It is related to distance-edge-monitoring sets in the following sense. Given a distance-edge-monitoring set $M$, for every pair $x,y$ of \emph{adjacent} vertices, there is a vertex $z$ of $M$ such that either $x$ is on \emph{every} shortest path from $z$ to $y$, or $y$ is on \emph{every} shortest path from $z$ to $x$.

The concept of an \emph{(strong) edge-geodetic set} is also related to ours. A set $S$ of vertices is an edge-geodetic set if for every edge $e$ of $G$, there are two vertices $x,y$ of $S$ such that $e$ is on some shortest path from $x$ to $y$. It is a strong edge-geodetic set if to every pair $x,y$ of $S$, we can assign a shortest $x-y$ path to $\{x,y\}$ such that every edge of $G$ belongs to one of these $|S|\choose 2$ assigned shortest paths~\cite{geodetic}.

A model related (only by name) is the one of \emph{edge monitoring sets}~\cite{edgemon-triangle}: a set $S$ is edge monitoring if for every edge $xy$ of $G$, $e$ belongs to a triangle $xyz$ with $z\in S$. But this is very far from our definition.

\paragraph{Our results.} We first derive a number of basic results about distance-edge-monitoring sets in Section~\ref{sec:prelim} (where we also give some useful definitions).

In Section~\ref{sec:bounds}, we study $\EM$ for some basic graph families (like trees and grids) and relate this parameter to other standard graph parameters such as arboricity $\arb$, vertex cover number $\VC$ and feedback edge set number $\FES$. We show that $\EM(G)=1$ if and only if $G$ is a tree. We show that for any graph $G$ of order $n$, $\EM(G)\geq\arb(G)$. Moreover, $\EM(G)\leq\VC(G)\leq n-1$ (with equality if and only if $G$ is complete). We show that for some families of graphs $G$, $\EM(G)=\VC(G)$, for instance this is the case for complete bipartite graphs and hypercubes. Then we show that $\EM(G)\leq 2\FES(G)-2$ when $\FES(G)\geq 3$ (when $\FES(G)\leq 2$, $\EM(G)\leq \FES(G)+1$).\footnote{This is an improvement over the results from the conference version~\cite{CALDAM}, where we proved $\EM(G)\leq 5\FES(G)-5$.}

In Section~\ref{sec:dem=2}, we characterize connected graphs $G$ with $\EM(G)=2$.

In Section~\ref{sec:complex}, we show that \EMdec is NP-complete, even for apex graphs (graphs obtained from a planar graph by adding an extra vertex). Then, we show that \EMopt can be approximated in polynomial time within a factor of $\ln(|E(G)|+1)$ by a reduction to the set cover problem. Finally, we show that no essentially better ratio can be obtained (unless P$=$NP), even for graphs that are of diameter~$4$, bipartite and of diameter~$6$, or bipartite and of maximum degree~$3$. For the same restrictions, the problem is unlikely to be fixed parameter tractable when parameterized by the solution size. These hardness results are obtained by reductions from the \textsc{Set Cover} problem.

We conclude our paper in Section~\ref{sec:conclu}.

\section{Preliminaries}\label{sec:prelim}



We now give some useful lemmas about basic properties of distance-edge-monitoring sets. We start with the proof of Proposition~\ref{prop:location}.

\begin{proof}[Proof of Proposition~\ref{prop:location}]
Suppose by contradiction that there are two distinct edges $e$ and $e'$ with $P(M,e)=P(M,e')$. Since by definition, $P(M,e)\neq\emptyset$, we have $(x,v)\in P(M,e)$ (and thus $(x,v)\in P(M,e')$), where $x\in M$ and $v\in V(G)$. Thus, all shortest paths between $x$ and $v$ contain both $e$ and $e'$. Assume that on one of these shortest paths, $e$ is closer to $x$ than $e'$, and let $w$ be the endpoint of $e$ that is closest to $v$. Then, we have $(x,w)\in P(M,e)$ but $(x,w)\notin P(M,e')$: a contradiction.
\end{proof}

An edge $e$ in a graph $G$ is a \emph{bridge} if $G-e$ has more connected components than $G$. We will now show that bridges are very easy to monitor.

\begin{lemma}\label{lemma:bridges}
Let $G$ be a connected graph and let $e$ be a bridge of $G$. For any vertex $x$ of $G$, we have $e\in \M(x)$.
\end{lemma}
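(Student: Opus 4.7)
The plan is to exploit the defining structural property of a bridge: removing $e$ disconnects $G$ into strictly more components than before. Write $e = uv$, and let $C_u$ and $C_v$ be the two connected components of $G - e$ containing $u$ and $v$ respectively. The whole argument will reduce to producing, for an arbitrary vertex $x$, a witness vertex $w$ whose distance to $x$ changes when $e$ is removed, thereby certifying $e \in M(x)$ directly from the definition.

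First I would pick such a $w$ as follows. Given $x \in V(G)$, observe that $x$ lies in exactly one of $C_u$ or $C_v$; without loss of generality $x \in C_u$ (the argument is symmetric). Then choose $w = v$. Since $G$ is connected, $d_G(x,v)$ is finite, but since $x$ and $v$ lie in distinct components of $G - e$, we have $d_{G-e}(x,v) = \infty$. Therefore $d_G(x,v) \neq d_{G-e}(x,v)$, which is exactly the condition $(x,v) \in P(\{x\}, e)$. By the definition of $M(x)$, this gives $e \in M(x)$. The special cases $x = u$ and $x = v$ are handled by exactly the same choice: if $x = u$, take $w = v$ (then $d_G(x,v) = 1$ and $d_{G-e}(x,v) = \infty$), and symmetrically for $x = v$.

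There is no real obstacle here; the only thing to be careful about is the convention $d_{G-e}(x,y) = \infty$ when the two vertices are separated in $G - e$, which was established earlier in the preliminaries. The argument does not need to inspect shortest paths explicitly — the bridge property forces \emph{every} $x$-to-$v$ walk in $G$ to traverse $e$, so in particular every shortest path does, which is consistent with the alternative formulation of $P(\{x\}, e)$ given in the definition.
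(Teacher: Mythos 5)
Your proof is correct and follows essentially the same approach as the paper: both arguments take as witness the endpoint of $e$ lying on the far side of the bridge from $x$ and conclude that $(x,v)\in P(\{x\},e)$. The only cosmetic difference is that you certify the witness by noting $d_{G-e}(x,v)=\infty$ while the distance in $G$ is finite, whereas the paper phrases it as the inequality $d_G(x,u)\neq d_G(x,v)$ and picks the farther endpoint; your version is, if anything, slightly more explicit about why the distance actually changes.
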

\begin{proof}
Assume that $e=uv$. Since $e$ is a bridge, we have $d_G(x,u)\neq d_G(x,v)$. If $d_G(x,u)<d_G(x,v)$, then $(x,v)\in P(\{x\},e)$. Otherwise, we have $(x,u)\in P(\{x\},e)$. In both cases, $e\in \M(x)$.
\end{proof}

We now introduce the following terminology from~\cite{ELW12j}. In a graph, a vertex is a \emph{core vertex} if it has degree at least~$3$. A path with all internal vertices of degree~$2$ and whose end-vertices are core vertices is called a \emph{core path} (note that we allow the two end-vertices to be equal, but all other vertices must be distinct). A core path that is a cycle (that is, both end-vertices are equal) is a \emph{core cycle}. The \emph{base graph}  of a graph $G$ is the graph obtained from $G$ by iteratively removing vertices of degree~$1$ (thus, the base graph of a forest is the empty graph).


Lemma~\ref{lemma:bridges} implies the following.

\begin{observation}\label{obs:EM2}
Let $G$ be a graph, and $G_b$ be its base graph. Then, $\EM(G)=\EM(G_b)$.
\end{observation}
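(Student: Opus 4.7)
The plan is to prove both inequalities $\EM(G)\le\EM(G')$ and $\EM(G')\le\EM(G)$ separately, using two structural facts about the base graph. First, every edge of $G$ not in $E(G')$ is a bridge of $G$: the vertices of $V(G)\setminus V(G')$ induce a forest whose connected components are each attached to $G'$ at a unique vertex, because otherwise such a component would contain a cycle whose vertices have degree at least $2$ at every step of the iterative leaf-peeling and would therefore survive in $G'$. For $x\in V(G)\setminus V(G')$, I will write $r(x)\in V(G')$ for this unique attachment vertex. Second, $G'$ is an isometric subgraph of $G$: for all $u,v\in V(G')$, $d_G(u,v)=d_{G'}(u,v)$, and moreover $d_{G-e}(u,v)=d_{G'-e}(u,v)$ for every $e\in E(G')$, because any path from $u$ to $v$ that strays into a hanging tree must enter and exit through the same root and is thus strictly longer than a corresponding path staying inside $G'$.

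For the easy direction $\EM(G)\le\EM(G')$, I will take a minimum distance-edge-monitoring set $M'$ of $G'$ and argue that $M'$, regarded as a subset of $V(G)$, monitors every edge of $G$. If $e\notin E(G')$, then $e$ is a bridge of $G$ by the first structural fact, so Lemma~\ref{lemma:bridges} implies that every vertex of $M'$ monitors $e$ (and $M'\neq\emptyset$ as $G'$ has an edge). If $e\in E(G')$, then $M'$ provides some $(x,y)\in P(M',e)$ in $G'$, and the distance equalities in the second fact immediately promote $(x,y)$ to a witness in $G$.

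For the reverse direction $\EM(G')\le\EM(G)$, I will take a minimum DEM set $M$ of $G$ and produce $M'\subseteq V(G')$ with $|M'|\le|M|$ by replacing each $x\in M\setminus V(G')$ by $r(x)$. To verify that $M'$ monitors $e\in E(G')$ in $G'$, I will fix a witness $(x,y)\in P(M,e)$ in $G$ and set $x'$ to be $x$ if $x\in V(G')$ and $r(x)$ otherwise, and similarly for $y'$. Since $e\in E(G')$, every shortest $x$-$y$ path in $G$ must leave the hanging tree of $x$ via $x'$ and enter the hanging tree of $y$ via $y'$, so it splits as a tree-path $x$-$x'$, a shortest $x'$-$y'$ path in $G'$ (by isometry) using $e$, and a tree-path $y'$-$y$. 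If some shortest $x'$-$y'$ path in $G'$ were to avoid $e$, splicing it into this decomposition would give a shortest $x$-$y$ path in $G$ avoiding $e$, contradicting $(x,y)\in P(M,e)$. Therefore $(x',y')\in P(M',e)$ in $G'$, as required.

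The main obstacle is this second direction, where one has to preserve the strong property that \emph{every} shortest path between the projected endpoints contains $e$, rather than only that some shortest path does. The combination of the isometric embedding with the uniqueness of the attachment vertex of each hanging tree is what makes the splicing argument go through; the first direction is almost immediate given the same two facts plus Lemma~\ref{lemma:bridges}.
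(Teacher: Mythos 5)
Your proof is correct and follows the only natural route, which is exactly what the paper leaves implicit when it states that Observation~\ref{obs:EM2} follows from Lemma~\ref{lemma:bridges}: pendant edges are bridges, the base graph sits isometrically in $G$ (even after deleting a core edge), and a monitoring set of $G$ projects onto $V(G')$ by sending each vertex of a hanging tree to its unique attachment vertex. The only caveat, which the paper also glosses over, is that the statement should be read as assuming $G$ contains a cycle, since for a forest the base graph is empty while $\EM(G)=1$.
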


%
%

Given a vertex $x$ of a graph $G$ and an integer $i$, we let $L_i(x)$ denote the set of vertices at distance $i$ of $x$ in $G$. 

\begin{lemma}\label{lemma:M(x)}
Let $x$ be a vertex of a connected graph $G$. Then, an edge $uv$ belongs to $\M(x)$ if and only if $u\in L_i(x)$ and $v$ is the only neighbour of $u$ in $L_{i-1}(x)$, for some integer $i$.
\end{lemma}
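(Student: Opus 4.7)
The plan is to prove the biconditional directly, treating the two implications separately, and first make the observation that an edge $uv$ on any shortest path starting at $x$ must have its endpoints on consecutive levels $L_{i-1}(x)$ and $L_i(x)$. This is the key structural fact: along a shortest path from $x$, distances to $x$ strictly increase by one at each step, so if $u,v \in L_i(x)$ (same level), the edge $uv$ cannot lie on any shortest path from $x$, hence $uv \notin M(x)$. Thus in either direction we may assume $u \in L_i(x)$ and $v \in L_{i-1}(x)$ for some $i$.

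For the $(\Leftarrow)$ direction, I would simply take the witness vertex to be $u$ itself. Any shortest $x$-$u$ path has length $i$ and its penultimate vertex is a neighbour of $u$ in $L_{i-1}(x)$. By hypothesis, $v$ is the unique such neighbour, so every shortest $x$-$u$ path ends with the edge $vu$, which gives $(x,u) \in P(\{x\},uv)$ and therefore $uv \in M(x)$.

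For the $(\Rightarrow)$ direction, assume $uv \in M(x)$, so there exists $w$ such that every shortest $x$-$w$ path contains $uv$; by the preliminary observation, $u \in L_i(x)$ and $v \in L_{i-1}(x)$ for some $i$, and $v$ is traversed before $u$. Suppose for contradiction that $u$ has another neighbour $v' \neq v$ in $L_{i-1}(x)$. Then I can build an alternative shortest $x$-$w$ path by concatenating any shortest $x$-$v'$ path (of length $i-1$), the edge $v'u$, and the suffix of the original shortest path from $u$ to $w$; this concatenation has length $(i-1)+1+d_G(u,w) = d_G(x,w)$, so it is a shortest $x$-$w$ path that avoids the edge $uv$, contradicting $uv \in M(x)$.

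The argument is essentially routine once the level-structure observation is in place; the only mildly delicate step is the swap argument in the $(\Rightarrow)$ direction, where one has to be careful that the substituted prefix genuinely avoids $uv$ (which it does, since the new prefix stays within levels $L_0(x),\dots,L_{i-1}(x)$ ending at $v' \neq v$, and hence cannot use the edge $uv$ whose endpoints are $u \in L_i(x)$ and $v$). I do not anticipate any further obstacle.
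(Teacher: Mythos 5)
Your proof is correct and follows essentially the same route as the paper's: both directions rest on the observation that edges of shortest paths from $x$ join consecutive levels, the $(\Leftarrow)$ direction uses $u$ itself as the witness, and the $(\Rightarrow)$ direction is the same rerouting argument through an alternative neighbour $v'\in L_{i-1}(x)$ (which you spell out in more detail than the paper does). No gaps.
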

\begin{proof}
  Let $uv\in \M(x)$. Then, there exists a vertex $y$ such that all shortest paths from $y$ to $x$ go through $uv$. Thus, one of $u$ and $v$ (say, $u$) is in a set $L_i(x)$ and the other ($v$) is in a set $L_{i-1}(x)$, for some positive integer $i$. Moreover, $v$ must be the only neighbour of $u$ in $L_{i-1}(x)$, since otherwise there would be a shortest path from $y$ to $x$ going through $u$ but avoiding $uv$.

Conversely, if $u$ is a vertex in $L_i(x)$ with a unique neighbour $v$ in $L_{i-1}(x)$, then the edge $uv$ belongs to $\M(x)$ since all shortest paths from $u$ to $x$ use it.
\end{proof}

We obtain some immediate consequences of Lemma~\ref{lemma:M(x)}.


\begin{lemma}\label{lemma:M(x)-acyclic}
For a vertex $x$ of a graph $G$, the set of edges $\M(x)$ induces a forest.
\end{lemma}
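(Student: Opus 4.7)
The plan is to apply Lemma~\ref{lemma:M(x)} and argue by contradiction, exploiting the fact that the edges in $M(x)$ have a BFS-layered structure: every edge goes between two consecutive levels $L_{i-1}(x)$ and $L_i(x)$, and the endpoint in the deeper level has the other endpoint as its \emph{unique} neighbour in the previous level.

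Suppose for contradiction that $M(x)$ contains a cycle $C=v_0v_1\cdots v_{k-1}v_0$. By Lemma~\ref{lemma:M(x)}, each edge $v_jv_{j+1}$ of $C$ joins some $L_{i_j}(x)$ to $L_{i_j-1}(x)$, so consecutive vertices on $C$ differ in level by exactly one. Pick a vertex $v_j$ of $C$ whose level $\ell:=d_G(x,v_j)$ is maximum among the vertices of $C$. Since levels on $C$ change by $\pm1$ along edges and $v_j$ is at the top of the cycle, the two cycle-neighbours $v_{j-1}$ and $v_{j+1}$ of $v_j$ must both lie in $L_{\ell-1}(x)$.

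Now Lemma~\ref{lemma:M(x)} applied to the edges $v_{j-1}v_j$ and $v_jv_{j+1}$ (both in $M(x)$) says that $v_{j-1}$ is the unique neighbour of $v_j$ in $L_{\ell-1}(x)$, and that $v_{j+1}$ is also the unique neighbour of $v_j$ in $L_{\ell-1}(x)$. Hence $v_{j-1}=v_{j+1}$, which is impossible in a cycle of length at least $3$. This contradiction shows that $M(x)$ contains no cycle, so it induces a forest.

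There is essentially no obstacle here: the whole argument is a direct consequence of Lemma~\ref{lemma:M(x)}, and the only thing to notice is that picking the level-maximum vertex forces both incident cycle-edges to go ``downwards'', which then clashes with the uniqueness clause of that lemma.
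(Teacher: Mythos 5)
Your proof is correct and follows essentially the same route as the paper: both rest entirely on the uniqueness clause of Lemma~\ref{lemma:M(x)}, the paper phrasing it as ``each vertex has at most one parent in the previous level, and every edge of $M(x)$ is a vertex--parent edge,'' while you make the resulting acyclicity explicit by taking a deepest vertex of a hypothetical cycle and deriving $v_{j-1}=v_{j+1}$. Your version spells out the step the paper leaves implicit, but the underlying argument is the same.
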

\begin{proof}
Let $G_x$ be the subgraph of $G$ induced by the edges in $\M(x)$. By Lemma~\ref{lemma:M(x)}, an edge $e$ belongs to $\M(x)$ if and only if $e=uv$, $u\in L_i(x)$ and $v$ is the only neighbour of $u$ in $L_{i-1}(x)$. In this case, we let $v$ be the \emph{parent} of $u$. Each vertex in $G_x$ has at most one parent in $G_x$, and each edge of $G_x$ is an edge between a vertex and its parent. Thus, $G_x$ is a forest.
\end{proof}

\begin{lemma}\label{lemma:incident}
Let $G$ be a graph and $x$ a vertex of $G$. Then, for any edge $e$ incident with $x$, we have $e\in \M(x)$.
\end{lemma}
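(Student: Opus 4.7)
The plan is to deduce this directly from Lemma~\ref{lemma:M(x)}, which characterizes the edges in $M(x)$ via the BFS levels $L_i(x)$. Write $e = xu$. Since $x$ sits in $L_0(x) = \{x\}$ and $u$ is a neighbour of $x$, we have $u \in L_1(x)$. The only candidate neighbour of $u$ in $L_{0}(x)$ is $x$ itself, so trivially $x$ is the unique such neighbour, and Lemma~\ref{lemma:M(x)} yields $e \in M(x)$.

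For a fully self-contained alternative (in case one prefers not to invoke Lemma~\ref{lemma:M(x)}), one can argue directly from the definition: the unique shortest path from $x$ to $u$ in $G$ consists of the single edge $e$, so $d_G(x,u) = 1$, whereas in $G - e$ any $x$--$u$ walk must use a different neighbour of $x$ and hence has length at least $2$ (or $\infty$ if $u$ becomes disconnected from $x$). Thus $d_G(x,u) \neq d_{G-e}(x,u)$, which means $(x,u) \in P(\{x\}, e)$, and therefore $e \in M(x)$.

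There is essentially no obstacle here; the statement is an immediate special case of Lemma~\ref{lemma:M(x)} applied at the trivial BFS level $L_0(x)$, and the proof is one or two lines. The only thing to be mindful of is handling the case $d_{G-e}(x,u) = \infty$ (when $e$ is a bridge), which is covered by the convention stated earlier in the paper.
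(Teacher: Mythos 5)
Your proof is correct and follows exactly the paper's own argument: the endpoint $u$ of $e=xu$ lies in $L_1(x)$ and has $x$ as its unique neighbour in $L_0(x)=\{x\}$, so Lemma~\ref{lemma:M(x)} applies. The direct argument you offer as an alternative is also valid, but the main route is the same as in the paper.
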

\begin{proof}
For every vertex $y$ of $L_1(x)$ (that is, every neighbour of $x$), $x$ is the unique neighbour of $y$ in $L_0(x)=\{x\}$. Thus, the claim follows from Lemma~\ref{lemma:M(x)}.
\end{proof}

We next give a characterization for the cases where no other edge (other than those incident with $x$) belongs to $\M(x)$.

\begin{lemma}\label{lemma:M(x)=incident}
  Let $G$ be a connected graph with a vertex $x$ of $G$. The following two conditions are equivalent.
  \begin{itemize}
  \item[(i)] $\M(x)$ is the set of edges incident with $x$.
  \item[(ii)] For every vertex $y$ of $G$ with $y \in V(G) \setminus (\{x\} \cup L_1(x))$, there exist two shortest paths from $x$ to $y$ sharing at most one edge (the one incident with $x$).
  \end{itemize}
\end{lemma}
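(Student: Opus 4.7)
The proof splits into the two implications.

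For $(ii) \Rightarrow (i)$: Lemma~\ref{lemma:incident} already puts every edge incident with $x$ into $M(x)$, so only the reverse inclusion needs proof. Assume for contradiction that some edge $uv \in M(x)$ is not incident with $x$. Lemma~\ref{lemma:M(x)} places $u$ in some layer $L_i(x)$ with $i \geq 2$ and forces $v$ to be the unique neighbour of $u$ in $L_{i-1}(x)$; hence every shortest $x$-$u$ path uses $uv$. Applying (ii) with $y = u$ yields two shortest $x$-$u$ paths whose common edges are contained in a single edge incident with $x$, yet both paths contain $uv$, which is not incident with $x$---a contradiction.

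For $(i) \Rightarrow (ii)$: fix $y \neq x$ with $d = d(x,y)$. If $d = 1$, take the single edge $xy$ as both paths. For $d \geq 2$ I will in fact find two \emph{edge-disjoint} shortest $x$-$y$ paths, which certainly satisfies (ii). Consider the shortest-path DAG $H$ from $x$: its vertex set is $V(G)$ and its arcs are the pairs $(u,v)$ such that $uv \in E(G)$, $u \in L_{i-1}(x)$ and $v \in L_i(x)$ for some $i \geq 1$. By Lemma~\ref{lemma:M(x)} combined with (i), every vertex of $H$ at distance at least $2$ from $x$ has in-degree at least $2$ in $H$. Edge-disjoint $x$-$y$ dipaths in $H$ are edge-disjoint shortest $x$-$y$ paths of $G$, so by Menger's theorem it suffices to show that the minimum $(x,y)$-edge-cut in $H$ has size at least $2$.

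Suppose for contradiction that the cut is a single arc $e = (a,b)$ with $b \in L_i(x)$. If $i \geq 2$, hypothesis (i) yields a second in-neighbour $a' \neq a$ of $b$, and concatenating a shortest $x$-$a'$ dipath, the arc $a'b$, and a shortest $b$-$y$ dipath produces a shortest $x$-$y$ dipath of length $(i-1)+1+(d-i)=d$ avoiding $e$, contradiction. Thus $i = 1$ and $e = xb$, so every shortest $x$-$y$ dipath starts with the arc $xb$. But since $d \geq 2$, any shortest $x$-$y$ dipath contains some vertex $w \in L_2(x)$; by (i), $w$ has at least two parents $c_1, c_2$ in $L_1(x)$, and each $c_j$ extends via $x \to c_j \to w \to \dots \to y$ to a shortest $x$-$y$ dipath whose first arc is $xc_j$, so at least one such dipath starts with an arc different from $xb$---the desired contradiction. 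The delicate step I expect to be the main obstacle is exactly this last bootstrap: the Menger argument immediately rules out a bottleneck deep inside $H$ (level $i \geq 2$), but ruling out a bottleneck at a level-$1$ edge requires converting \emph{multiplicity of parents at level $2$} into \emph{multiplicity of distinct first edges} by careful choice of the witness vertex $w$.
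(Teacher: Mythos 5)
Your proof is correct, but the direction $(i)\Rightarrow(ii)$ takes a genuinely different route from the paper. The paper proves $(i)\Rightarrow(ii)$ by induction on $d_G(x,y)$: for $y\in L_{i+1}(x)$ it picks two distinct parents $u,v$ of $y$ (guaranteed by Lemma~\ref{lemma:M(x)} and (i)), takes shortest paths from $u$ and from $v$ to $x$, and if they share an edge it splices in the two nearly-disjoint paths obtained from the induction hypothesis at the deepest shared edge. Your argument instead works globally on the BFS DAG and invokes Menger's theorem, reducing everything to showing that no single arc separates $y$ from $x$; the two cases (a bottleneck arc at level $i\geq 2$, killed by the second parent of its head, and a bottleneck arc at level $1$, killed by rerouting through a second parent of a level-$2$ vertex on a witness path) are both handled correctly, and the identification of arc-disjoint dipaths in the layered DAG with edge-disjoint shortest paths in $G$ is sound since every edge of $G$ yields at most one arc of $H$. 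Your approach buys two things: it sidesteps the splicing step of the induction (the one place where the paper's proof is terse, saying the paths "can easily be combined"), and it yields a strictly stronger conclusion, namely two fully edge-disjoint shortest $x$--$y$ paths whenever $d_G(x,y)\geq 2$ (the exception for the edge incident with $x$ in condition (ii) is only needed when $y$ is a neighbour of $x$). The paper's induction is more elementary in that it avoids Menger's theorem, and its recursive construction is closer in spirit to the local characterization of $M(x)$. Your $(ii)\Rightarrow(i)$ direction is the same argument as the paper's, phrased contrapositively.
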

\begin{proof}
  If (ii) holds, then for any $i>1$, every vertex in level $L_i(x)$ has at least two neighbours in level $L_{i-1}(x)$, so (i) is implied by Lemma~\ref{lemma:M(x)} and Lemma~\ref{lemma:incident}.

  Conversely, if (i) holds, we can prove (ii) by induction on $d_G(x,y)$. The claim is true for all vertices in $L_2(x)$. Assume it is true for vertices at levels $L_2(x),\ldots,L_i(x)$ and let $y$ be a vertex in $L_{i+1}(x)$. By Lemma~\ref{lemma:M(x)}, $y$ has at least two distinct neighbours, $u$ and $v$, in $L_i(x)$. Consider two shortest paths $P_u$ from $u$ to $x$ and $P_v$ from $v$ to $x$. If (ii) is satisfied, we are done. Otherwise, let $st$ be the edge common to both $P_u$ and $P_v$ that is closest to $u$ and $v$ (with $s\in L_j$ and $t\in L_{j-1}$ for some $j>1$). By the induction hypothesis, there are two shortest paths $P_1$ and $P_2$ from $s$ to $x$ that share at most an edge incident with $x$. Using these two shortest paths, we can easily combine them with $P_u$ and $P_v$ to obtain the two shortest paths from $y$ to $x$, as required.
\end{proof}




  

\section{Basic graph families and bounds}\label{sec:bounds}

In this section, we study $\EM$ for standard graph classes, and its relation with other standard graph parameters.

\subsection{Trees and grids}

\begin{theorem}\label{thm:trees}
Let $G$ be a connected graph with at least one edge. We have $\EM(G)=1$ if and only if $G$ is a tree.
\end{theorem}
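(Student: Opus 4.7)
The plan is to prove the two directions separately, each using one of the structural lemmas already established in Section~\ref{sec:prelim}.

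For the forward direction, suppose $G$ is a tree. Then every edge of $G$ is a bridge. I would pick any vertex $x$ of $G$ and apply Lemma~\ref{lemma:bridges} to conclude that $e \in M(x)$ for every edge $e$. Hence $\{x\}$ is a distance-edge-monitoring set, so $\EM(G) \leq 1$. Since $G$ has at least one edge, $\EM(G) \geq 1$, giving equality.

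For the reverse direction, I would prove the contrapositive: if $G$ is not a tree, then $\EM(G) \geq 2$. Since $G$ is connected but not a tree, $G$ contains a cycle, so the edge set $E(G)$ does not induce a forest. By Lemma~\ref{lemma:M(x)-acyclic}, for every vertex $x$ of $G$, the edges in $M(x)$ induce a forest, hence $M(x) \neq E(G)$. Thus no single vertex forms a distance-edge-monitoring set, and $\EM(G) \geq 2$.

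There is no real obstacle here: both directions follow immediately from the already-proven Lemmas~\ref{lemma:bridges} and~\ref{lemma:M(x)-acyclic}. The only thing to be a little careful about is the trivial lower bound $\EM(G) \geq 1$ in the forward direction, which requires the hypothesis that $G$ has at least one edge (otherwise the empty set would monitor the empty edge set).
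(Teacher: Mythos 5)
Your proposal is correct and matches the paper's own argument: both directions use exactly the same two lemmas (Lemma~\ref{lemma:bridges} for trees, Lemma~\ref{lemma:M(x)-acyclic} for the converse), with your contrapositive phrasing of the reverse direction being only a cosmetic difference. Nothing further to add.
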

\begin{proof}
  In a tree, every edge is a bridge. Thus, by Lemma~\ref{lemma:bridges}, if $G$ is a tree, any vertex $x$ of $G$ is a distance-edge-monitoring set and we have $\EM(G)\leq 1$ (and of course as long as there is an edge in $G$, $\EM(G)\geq 1$).

  For the converse, suppose that $\EM(G)=1$. Then, clearly, $G$ must have at least one edge. Moreover, since all edges of $G$ must belong to $\M(x)$, by Lemma~\ref{lemma:M(x)-acyclic}, $G$ must be a forest. Since $G$ is connected, $G$ is a tree.
\end{proof}

Let $G_{a,b}$ denote the grid of dimension $a\times b$. By Theorem~\ref{thm:trees}, we have $\EM(G_{1,a})=1$. We can compute all other values.

\begin{theorem}\label{thm:grids}
For any integers $a,b\geq 2$, we have $\EM(G_{a,b})=\max\{a,b\}$.
\end{theorem}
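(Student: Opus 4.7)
The plan is to prove both inequalities. Without loss of generality assume $a \le b$, so that $\max\{a,b\}=b$, and label the vertices of $G_{a,b}$ by pairs $(i,j)$ with $1 \le i \le a$ and $1 \le j \le b$.

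The central ingredient I would establish first is the following characterization: an edge $e$ of $G_{a,b}$ belongs to all shortest paths between two vertices $x$ and $y$ if and only if $x$ and $y$ both lie on the row (respectively, column) through $e$, with the two endpoints of $e$ separating $x$ from $y$ along that line. This rests on the standard fact that shortest paths in a grid coincide with the monotone lattice paths: whenever $x$ and $y$ are not both on the row of $e$, or lie on the same side of $e$ within that row, one can exhibit a monotone path between them avoiding $e$. The forward direction is the main technical obstacle and requires a brief case analysis on the relative positions of $x$, $y$, and $e$.

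From the necessity direction of this characterization, any distance-edge-monitoring set $M$ must contain at least one vertex in each of the $a$ rows and at least one vertex in each of the $b$ columns, since monitoring any horizontal edge of row $i$ requires $M$ to meet row $i$, and symmetrically for columns. Because the $b$ columns are pairwise disjoint, this alone forces $|M| \ge b = \max\{a,b\}$, giving the lower bound.

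For the matching upper bound, I would exhibit the explicit set
\[
M = \{(i,i) : 1 \le i \le a\} \cup \{(1,j) : a < j \le b\},
\]
which has size $b$ and meets every row and every column. The sufficiency direction of the characterization then certifies that $M$ is distance-edge-monitoring: for any horizontal edge of row $i$, pair a vertex of $M$ lying in row $i$ with a vertex at the far end of that row on the opposite side of the edge, and proceed analogously for vertical edges using the vertex of $M$ in the column through the edge.
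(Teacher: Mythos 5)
Your proposal is correct and follows essentially the same route as the paper: both arguments hinge on the fact that the set of edges monitored by a vertex $x$ of the grid is exactly the set of edges in the row and column of $x$, derive the lower bound from the pairwise disjointness of the rows (or columns), and certify the upper bound with a set meeting every row and every column. The only cosmetic difference is that the paper obtains this characterization as an instance of its general Lemma~\ref{lemma:M(x)} on BFS levels, whereas you rederive it directly from the monotone-lattice-path description of shortest paths in grids.
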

  \begin{proof}
  Without loss of generality we assume that $\max\{a,b\}=a$, and $G_{a,b}$ has $a$ rows and $b$ columns.

Clearly, for any vertex $x$, by Lemma~\ref{lemma:M(x)}, $\M(x)$ consists of the edges that are in the same row and column as $x$. This shows that any distance-edge-monitoring set $M$ must contain a vertex of each row and column, and $\EM(G_{a,b})\geq a$.

 To see that $\EM(G_{a,b})\leq a$, we can choose any set $M$ of $a$ vertices containing exactly one vertex of every row and at least one vertex of every column of $G_{a,b}$: $M$ is a distance-edge-monitoring set of $G_{a,b}$ of size $a$.
\end{proof}

\subsection{Connection to arboricity and clique number}

The \emph{arboricity} $\arb(G)$ of a graph $G$ is the smallest number of sets into which $E(G)$ can be partitioned and such that each set induces a forest. The \emph{clique number} $\omega(G)$ of $G$ is the size of a largest clique in $G$.

\begin{theorem}\label{thm:arboricity}
For any graph $G$ of order $n$ and size $m$, we have $\EM(G)\geq\arb(G)$, and thus $\EM(G)\geq\frac{m}{n-1}$ and $\EM(G)\geq\frac{\omega(G)}{2}$.
\end{theorem}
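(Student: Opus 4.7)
The plan is to derive the bound directly from Lemma~\ref{lemma:M(x)-acyclic}, and then obtain the two consequences via standard arboricity estimates.

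Let $M=\{x_1,\ldots,x_k\}$ be a distance-edge-monitoring set of $G$ with $k=\EM(G)$. By definition, $E(G)=\bigcup_{i=1}^{k} M(x_i)$. By Lemma~\ref{lemma:M(x)-acyclic}, each set $M(x_i)$ induces a forest of $G$. Thus, $E(G)$ is covered by $k$ forests; by assigning each edge to exactly one of the forests in which it appears, we obtain a partition of $E(G)$ into $k$ sub-forests, so $\arb(G)\leq k=\EM(G)$.

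For the first numerical consequence, note that since any forest on $n$ vertices has at most $n-1$ edges, a partition of $E(G)$ into $\arb(G)$ forests forces $m\leq \arb(G)\cdot(n-1)$, hence $\arb(G)\geq m/(n-1)$. For the second, let $K$ be a clique of $G$ on $\omega=\omega(G)$ vertices; any forest has at most $\omega-1$ edges inside $V(K)$, so the restriction to $K$ requires at least $\lceil \binom{\omega}{2}/(\omega-1)\rceil=\lceil \omega/2\rceil\geq \omega/2$ forests, and in particular $\arb(G)\geq \omega/2$. Combining with $\EM(G)\geq \arb(G)$ gives both inequalities.

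No real obstacle is expected: the entire argument rests on Lemma~\ref{lemma:M(x)-acyclic}, and the rest are routine counting facts about forests (the general form is Nash-Williams' theorem, but we only need the two trivial instances above).
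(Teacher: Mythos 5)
Your proof is correct and follows exactly the paper's argument: Lemma~\ref{lemma:M(x)-acyclic} gives a partition of $E(G)$ into $\EM(G)$ forests, hence $\arb(G)\leq\EM(G)$, and the two numerical bounds follow from the same edge-counting facts (a forest has at most $n-1$ edges overall and at most $\omega-1$ edges inside a clique of size $\omega$). No differences worth noting.
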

\begin{proof}
By Lemma~\ref{lemma:M(x)-acyclic}, for each vertex $x$ of a distance-edge-monitoring set $M$, $\M(x)$ induces a forest. Thus, to each edge $e$ of $G$ we can assign one of the forests $\M(x)$ such that $e\in \M(x)$. This is a partition of $G$ into $|M|$ forests, and thus $\arb(G)\leq\EM(G)$.

Moreover, it is not difficult to see that $\arb(G)\geq\frac{m}{n-1}$ (since a forest has at most $n-1$ edges) and $\arb(G)\geq\frac{\omega(G)}{2}$ (since a clique of size $k=\omega(G)$ has $k(k-1)/2$ edges but a forest that is a subgraph of $G$ can contain at most $k-1$ of these edges).
\end{proof}

\subsection{Connection to vertex covers and consequences for hypercubes and complete bipartite graphs}

We next see that distance-edge-monitoring sets are relaxations of vertex covers. A set $C$ of vertices is a \emph{vertex cover} of $G$ if every edge of $G$ has one of its endpoints in $C$. The smallest size of a vertex cover of $G$ is denoted by $\VC(G)$.

\begin{theorem}\label{thm:VC}
In any graph $G$ of order $n$, any vertex cover of $G$ is a distance-edge-monitoring set, and thus $\EM(G)\leq\VC(G)\leq n-1$. Moreover, we have $\EM(G)=n-1$ if and only if $G$ is the complete graph of order $n$.
\end{theorem}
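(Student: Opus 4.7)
The plan is to deduce all three parts from Lemma~\ref{lemma:incident}, together with a short computation of $M(x)$ in the complete graph.

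For the inclusion $\EM(G)\leq\VC(G)$, I would take any vertex cover $C$ of $G$. Every edge $e=uv$ has an endpoint in $C$, say $u$, and Lemma~\ref{lemma:incident} immediately gives $e\in M(u)$. Hence $\bigcup_{x\in C}M(x)=E(G)$, so $C$ is a distance-edge-monitoring set and $\EM(G)\leq |C|$. For $\VC(G)\leq n-1$, it suffices to observe that for any vertex $v$, the set $V(G)\setminus\{v\}$ is trivially a vertex cover (every edge incident with $v$ has its other endpoint there, and every other edge lies entirely outside $\{v\}$).

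To characterize equality, I would handle the two directions separately. The easy one: if $G$ is not complete, pick two non-adjacent vertices $u,v$; no edge of $G$ is contained in $\{u,v\}$, so $V(G)\setminus\{u,v\}$ is a vertex cover of size $n-2$, and the first part yields $\EM(G)\leq n-2<n-1$. For the converse, I would verify directly that $\EM(K_n)\geq n-1$. In $K_n$ every other vertex lies at distance~$1$ from $x$, so $L_1(x)=V(K_n)\setminus\{x\}$ and $L_i(x)=\emptyset$ for $i\geq 2$. Lemma~\ref{lemma:M(x)} then forces $M(x)$ to consist precisely of the edges incident with $x$. Consequently, any distance-edge-monitoring set $M$ of $K_n$ must cover every edge by incidence, i.e., be a vertex cover of $K_n$; in particular, $M$ cannot omit two vertices simultaneously, because then the edge between them would be unmonitored. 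Hence $|M|\geq n-1$, and combining with the general upper bound gives $\EM(K_n)=n-1$.

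I do not foresee any genuine obstacle: the argument is essentially a clean translation between vertex covers and distance-edge-monitoring sets via Lemma~\ref{lemma:incident}. The only point that needs to be checked carefully is that in $K_n$ the monitoring action of $x$ collapses to incidence, which is immediate from the degenerate BFS structure (only levels $L_0$ and $L_1$ are non-empty) combined with Lemma~\ref{lemma:M(x)}.
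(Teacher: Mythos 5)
Your proof is correct, and for the inequality $\EM(G)\leq\VC(G)\leq n-1$ it coincides with the paper's argument (vertex cover plus Lemma~\ref{lemma:incident}, then $V(G)\setminus\{x\}$ as a trivial cover). The characterization of equality is where you diverge, in two ways that are worth recording. First, for the direction ``$\EM(G)=n-1$ implies $G$ complete'', the paper argues via $\VC(G)=n-1$ and then exhibits a path $xyz$ with $xz$ a non-edge to build a cover of size $n-2$ (after disposing of the disconnected case separately); you instead observe directly that any two non-adjacent vertices $u,v$ give the vertex cover $V(G)\setminus\{u,v\}$, which is a mild simplification that also absorbs the disconnected case without a separate argument. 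Second, and more substantively, you explicitly prove the converse direction $\EM(K_n)\geq n-1$ by computing that in $K_n$ only the levels $L_0(x)$ and $L_1(x)$ are non-empty, so Lemma~\ref{lemma:M(x)} forces $M(x)$ to be exactly the edges incident with $x$, whence every distance-edge-monitoring set of $K_n$ is a vertex cover and has size at least $n-1$. The paper's written proof stops after the forward implication and leaves this lower bound implicit (it is essentially the content of Observation~\ref{obs:VC} applied to $K_n$, but that observation appears only afterwards), so your version is actually the more complete of the two.
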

\begin{proof}
  Let $C$ be a vertex cover of $G$. By Lemma~\ref{lemma:incident}, for every edge $e$, there is a vertex $x \in C$ with $e\in \M(x)$, thus $C$ is distance-edge-monitoring.

  Moreover, any graph $G$ of order $n$ has a vertex cover of size $n-1$: for any vertex $x$, the set $V(G)\setminus\{x\}$ is a vertex cover of $G$.

  Finally, suppose thet $\EM(G)=n-1$: then also $\VC(G)=n-1$. If $G$ is not connected, we have $\VC(G)\leq n-2$ (starting with $V(G)$ and removing any vertex from each connected component of $G$ yields a vertex cover), thus $G$ is connected. Suppose by contradiction that $G$ is not a complete graph. 
Then, we have two vertices $x, y$ in $G$ such that $xy$ is not an edge of $G$. Then, $V(G)\setminus\{x,y\}$ is a vertex cover of $G$, a contradiction.

  This completes the proof.
\end{proof}

In some graphs, any distance-edge-monitoring set is a vertex cover.

\begin{observation}\label{obs:VC}
If, for every vertex $x$ of a graph $G$, $\M(x)$ consists exactly of the edges incident with $x$, then a set $M$ is a distance-edge-monitoring set of $G$ if and only if it is a vertex cover of $G$.
\end{observation}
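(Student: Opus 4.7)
The plan is to prove the two directions of the equivalence separately, both of which are essentially immediate unpackings of the definitions under the hypothesis of the observation.

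For the forward direction, if $M$ is a vertex cover of $G$, then $M$ is a distance-edge-monitoring set by Theorem~\ref{thm:VC} (which, via Lemma~\ref{lemma:incident}, already says that every vertex cover is distance-edge-monitoring, independently of the hypothesis of the current observation). So this direction requires only a one-line citation.

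For the reverse direction, I would argue by the definition of a distance-edge-monitoring set together with the hypothesis. Recall that $M$ is distance-edge-monitoring exactly when $\bigcup_{x\in M} M(x)=E(G)$. Under the hypothesis, for every vertex $x$ of $G$ (and in particular for every $x\in M$), $M(x)$ coincides with the set $\partial(x)$ of edges incident with $x$. Hence
\[
\bigcup_{x\in M} M(x) \;=\; \bigcup_{x\in M} \partial(x),
\]
and this set is exactly the set of edges of $G$ having at least one endpoint in $M$. Requiring this union to be $E(G)$ is, by definition, the condition that $M$ is a vertex cover of $G$.

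There is no real obstacle here: the entire content of the observation is that the hypothesis forces the trivial inclusion from Lemma~\ref{lemma:incident} to be an equality, which collapses the notion of distance-edge-monitoring set down to the notion of vertex cover. I would just make sure the presentation distinguishes cleanly between the general inclusion $\partial(x)\subseteq M(x)$ (always true) and the equality $\partial(x)=M(x)$ assumed here, since it is the equality that drives the nontrivial direction.
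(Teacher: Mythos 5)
Your proof is correct and follows exactly the argument the paper leaves implicit for this observation: the vertex-cover direction is Lemma~\ref{lemma:incident} (or Theorem~\ref{thm:VC}), and the converse is the direct unpacking of $\bigcup_{x\in M}M(x)=E(G)$ under the hypothesis $M(x)=\partial(x)$. Nothing is missing.
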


Note that Observation~\ref{obs:VC} does not provide a characterization of graphs with $\EM(G)=\VC(G)$. For example, as seen in Theorem~\ref{thm:grids}, for the grid $G_{a,2}$, we have $\EM(G_{a,2})=\VC(G_{a,2})=a$, but for any vertex $x$ of $G_{a,2}$, $\M(x)$ consists of the whole row and column of $G_{a,2}$.

Let $K_{a,b}$ be the complete bipartite graph with parts of sizes $a$ and $b$, and let $H_d$ denote the hypercube of dimension $d$. 
  Both $K_{a,b}$ and $H_d$ satisfy Condition (ii) of Lemma~\ref{lemma:M(x)=incident}. In a connected bipartite graph, the smallest vertex cover consists of the smallest of the two parts. Thus, by Observation~\ref{obs:VC}, we obtain the following.

\begin{corollary}\label{cor:hyper-completebip}
We have $\EM(K_{a,b})=\VC(K_{a,b})=\min\{a,b\}$, and $\EM(H_d)=\VC(H_d)=2^{d-1}$.
\end{corollary}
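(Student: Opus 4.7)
The plan is to apply Observation~\ref{obs:VC} to reduce the computation of $\EM$ to that of $\VC$, and then compute $\VC(K_{a,b})$ and $\VC(H_d)$ directly. To invoke Observation~\ref{obs:VC}, we need $M(x)$ to equal the set of edges incident with $x$ for every vertex $x$. By Lemma~\ref{lemma:M(x)=incident}, it suffices to verify Condition~(ii): for every $y\neq x$, there exist two shortest $x$--$y$ paths sharing at most the edge of $x$.

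For $K_{a,b}$, I would first dispose of the degenerate case $\min\{a,b\}=1$: the graph is a star, which is a tree, so $\EM(K_{a,b})=1=\VC(K_{a,b})$ by Theorem~\ref{thm:trees}. So assume $a,b\geq 2$, and fix $x$ in a part $A$. A vertex $y\in B$ is adjacent to $x$, so Condition~(ii) is vacuous. A vertex $y\in A\setminus\{x\}$ lies at distance~$2$ from $x$; since $|B|\geq 2$, we may pick two distinct $z_1,z_2\in B$, and the paths $x z_1 y$ and $x z_2 y$ are edge-disjoint. Then Observation~\ref{obs:VC} gives $\EM(K_{a,b})=\VC(K_{a,b})$, and the smaller bipartition class is clearly a vertex cover while König's theorem (or a direct argument via the perfect matching on the smaller side) shows no smaller one exists, so $\VC(K_{a,b})=\min\{a,b\}$.

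For $H_d$, view vertices as binary strings of length $d$. Fix $x$; by symmetry we may take $x=0^d$. For $y$ at Hamming distance $k\geq 2$, relabel coordinates so that $y=1^k 0^{d-k}$. The path $P_1$ obtained by flipping bits $1,2,\dots,k$ in order and the path $P_2$ obtained by flipping bits $k,k-1,\dots,1$ in order are two shortest $x$--$y$ paths whose intermediate vertices (weight $i$, with $0<i<k$) are respectively $1^i 0^{d-i}$ and $0^{k-i}1^i 0^{d-k}$; these are distinct, so the paths meet only at the endpoints and therefore share no edges at all, verifying Condition~(ii). Applying Observation~\ref{obs:VC} again yields $\EM(H_d)=\VC(H_d)$. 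Since $H_d$ is bipartite (by parity of weight) and admits the perfect matching obtained by pairing each string with its last-bit flip, König's theorem gives $\VC(H_d)=2^{d-1}$.

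The main (and only) technical point is the Hamming-distance path construction, where one must check that the two reverse-order bit-flip paths meet nowhere in the interior; this is immediate once the intermediate vertices are written out, as the sets of coordinates set to~$1$ are initial segments $\{1,\dots,i\}$ on one path and final segments $\{k-i+1,\dots,k\}$ on the other. Everything else reduces to the already-proved Lemma~\ref{lemma:M(x)=incident} and Observation~\ref{obs:VC}, plus elementary computations of vertex covers in bipartite graphs.
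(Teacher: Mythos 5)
Your proof is correct and follows essentially the same route as the paper: verify Condition~(ii) of Lemma~\ref{lemma:M(x)=incident}, apply Observation~\ref{obs:VC}, and compute the vertex cover number of these bipartite graphs. The only difference is that you explicitly carry out the verification of Condition~(ii) (the two edge-disjoint paths in $K_{a,b}$ and the reverse-order bit-flip paths in $H_d$) and separate out the star case $\min\{a,b\}=1$, both of which the paper leaves implicit.
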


\subsection{Connection to feedback edge set number}

A \emph{feedback edge set} of a graph $G$ is a set of edges such that removing them from $G$ leaves a forest. The smallest size of a feedback edge set of $G$ is denoted by $\FES(G)$ (it is sometimes called the \emph{cyclomatic number} of $G$).

The following folklore lemma uses the terminology defined in Section~\ref{sec:prelim} (a proof can be found for example in Section~5.3.1 of~\cite{ELW12j} or Section~4.1 of~\cite{KK20}).

\begin{lemma}[\cite{ELW12j,KK20}]\label{lemma:basegraph}
Let $G$ be a graph with $\FES(G)=k\geq 2$. The base graph of $G$ has at most $2k-2$ core vertices, that are joined by at most $3k-3$ edge-disjoint core paths.
\end{lemma}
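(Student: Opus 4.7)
The plan is to pass from $G$ to its base graph $G'$, then to the multigraph $H$ obtained from $G'$ by suppressing all degree-$2$ vertices, and to conclude with a standard degree-counting argument on $H$. Throughout, I would use the identity $\FES(F) = |E(F)| - |V(F)| + c(F)$ valid for any graph $F$, where $c(F)$ denotes the number of connected components of $F$.

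First I would verify that $\FES(G') = k$: each iterative deletion of a degree-$1$ leaf removes exactly one vertex and one edge while preserving the number of components, and hence preserves $\FES$. Next I would construct $H$ by iteratively suppressing the degree-$2$ vertices of $G'$; that is, for a degree-$2$ vertex $v$ with neighbors $u,w$ (possibly $u=w$), remove $v$ and add a single new edge between $u$ and $w$ (a loop if $u=w$, or an additional parallel edge if $uw$ was already present). Each such operation again removes one vertex and one edge and keeps the number of components unchanged, so $\FES(H) = k$. By construction, $V(H)$ is exactly the set of core vertices of $G'$, and $E(H)$ is in bijection with the edge-disjoint core paths of $G'$.

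Since $G$ is connected and $k \geq 2$ ensures $G'$ is nonempty, $H$ is connected, so $|E(H)| = |V(H)| + k - 1$. Every vertex of $H$ has degree at least $3$ (with loops contributing $2$), so the degree-sum formula gives $2|E(H)| \geq 3|V(H)|$. Combining the two relations yields $|V(H)| \leq 2k-2$, and substituting back gives $|E(H)| \leq 3k-3$, which is exactly the claimed bound on the number of core vertices and edge-disjoint core paths.

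The only subtle point is pure bookkeeping: one must carefully confirm that each reduction step preserves $\FES$, and that treating $H$ as a multigraph (with loops coming from core cycles and parallel edges coming from distinct core paths sharing the same pair of endpoints) is consistent with counting loops twice in the degree-sum identity. Once these conventions are in place, the argument is entirely elementary.
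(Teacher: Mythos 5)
Your proof is correct and is essentially the standard argument: the paper itself does not prove this lemma but cites it as folklore from~\cite{ELW12j,KK20}, and the proofs there proceed exactly as you do, by suppressing degree-$2$ vertices to get a multigraph $H$ of minimum degree~$3$ with $\FES(H)=k$ and combining $|E(H)|=|V(H)|+k-1$ with $2|E(H)|\geq 3|V(H)|$. Your handling of the edge cases (loops counting twice, $G'$ not being a bare cycle since $k\geq 2$) is the right bookkeeping.
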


In other words, Lemma~\ref{lemma:basegraph} says that the base graph of a graph $G$ with $\FES(G)=k\geq 2$ can be obtained from a multigraph $H$ of order at most $2k-2$ and size $3k-3$ by subdividing its edges an arbitrary number of times.

Theorem~\ref{thm:trees} states that for any tree $G$ (that is, a graph with feedback edge set number $0$), we have $\EM(G)\leq \FES(G)+1$. We now show the same bound for graphs $G$ with $\FES(G)\leq 2$.

\begin{theorem}\label{prop:bicyclic}
If $\FES(G)\leq 2$, then $\EM(G)\leq\FES(G)+1$. Moreover, if $\FES(G)\leq 1$, then equality holds.
\end{theorem}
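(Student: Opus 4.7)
The plan is to handle the three cases $k := \FES(G) \in \{0,1,2\}$ separately, using Observation~\ref{obs:EM2} to reduce to the base graph $G'$ of $G$.

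For $k=0$, the graph $G$ is a tree and Theorem~\ref{thm:trees} yields $\EM(G)=1=k+1$. For $k=1$, $G'$ is a single cycle $C_n$ with $n\ge 3$, and I would pick two vertices $x,y$ of $C_n$ at distance $\lfloor n/2\rfloor$; Lemma~\ref{lemma:M(x)} shows that the only edges of $C_n$ outside $M(x)$ are the one or two ``antipodal'' edges of $x$, both having $y$ as an endpoint and hence lying in $M(y)$ by Lemma~\ref{lemma:incident}. So $\EM(G)\le 2$, and combined with $\EM(G)\ge 2$ from Theorem~\ref{thm:trees} (since $G$ is not a tree), we get $\EM(G)=2=k+1$. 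This establishes the equality claim in both sub-cases $k\in\{0,1\}$.

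For $k=2$, Lemma~\ref{lemma:basegraph} tells us that $G'$ is obtained by subdividing a multigraph on at most two vertices with at most three edges, so $G'$ is of one of three types: (a) a \emph{theta graph} $\Theta_{\ell_1,\ell_2,\ell_3}$ with core vertices $u,v$ joined by three internally disjoint paths of lengths $\ell_1\le\ell_2\le\ell_3$; (b) a \emph{figure-8}, two cycles sharing a single vertex $c$; or (c) a \emph{dumbbell}, two cycles joined by a path at attachment vertices $u,v$. In each of (b) and (c) I would take two vertices $m_1,m_2$, one antipodal to the shared/attachment vertex inside each cycle: the cycle analysis from case $k=1$ handles the edges of each cycle, Lemma~\ref{lemma:bridges} handles the bridges on the connecting path in (c), and a short BFS computation shows that each $m_i$'s BFS entering the other cycle through the shared or attachment vertex monitors precisely the cycle edges incident to that vertex, which were the ones missed by the other monitor. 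Thus $\{m_1,m_2\}$ already suffices in these two cases.

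The main obstacle is case (a). A BFS-level analysis via Lemma~\ref{lemma:M(x)} shows that on each path $P_i$ the edges outside $M(u)$ cluster around position $(\ell_1+\ell_i)/2$ from $u$ (where the BFS fronts along $P_1$ and $P_i$ meet), while those outside $M(v)$ cluster around position $(\ell_i-\ell_1)/2$ from $u$. When $\ell_1\ge 2$ these two clusters are disjoint on every $P_i$, so $\{u,v\}$ is already a distance-edge-monitoring set. When $\ell_1=1$, the clusters on a long $P_i$ of odd length can overlap on a single ``middle'' edge of $P_i$, and if this happens on both $P_2$ and $P_3$ simultaneously then no single additional vertex $w$ added to $\{u,v\}$ covers both overlapping middles. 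I would then replace one of $u,v$ with two internal vertices, taking a set of the form $\{v,m_2,m_3\}$, where $m_i$ is an endpoint of the unmonitored middle edge of $P_i$; Lemma~\ref{lemma:incident} applied at $m_2,m_3$ covers those two troublesome edges, and a direct check via Lemma~\ref{lemma:M(x)} confirms that the BFS from $v$ still covers everything else. Verifying the exact cluster positions and the resulting coverage across the various sub-cases is the bulk of the bookkeeping.
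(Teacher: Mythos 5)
Your decomposition is exactly the paper's: reduce to the base graph, take an antipodal pair on the cycle for $k=1$, and for $k=2$ split into theta, figure-eight and dumbbell, using one vertex antipodal to the shared/attachment vertex in each cycle for the last two cases and a three-vertex set for the theta graph whose two core vertices are adjacent. The one step that is not correct as written is the choice of $m_i$ in that last case: the two endpoints of the unmonitored middle edge of $P_i$ are \emph{not} interchangeable. In $\Theta_{1,3,3}$, with $u,v$ adjacent and paths $ua_1a_2v$ and $ub_1b_2v$, the middle edges are $a_1a_2$ and $b_1b_2$, and the set $\{v,a_2,b_2\}$ is not distance-edge-monitoring: $u$ lies at distance $2$ from $a_2$ with two neighbours ($a_1$ and $v$) at distance $1$, so by Lemma~\ref{lemma:M(x)} none of $v$, $a_2$, $b_2$ monitors $ua_1$ (nor, symmetrically, $ub_1$). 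Your verification sketch (``the BFS from $v$ still covers everything else'') is exactly what fails here, because when $\ell_i=3$ the vertex of $P_i$ equidistant from $v$ along its two routes is the neighbour of $u$, and $M(v)$ misses both edges incident with it. The repair is to take $m_i$ to be the endpoint of the middle edge \emph{nearer the dropped core vertex} $u$ (so $\{v,a_1,b_1\}$ in the example): for odd $\ell_i$ the only edges of $P_i$ outside $M(v)$ are the two incident with the internal vertex $w_{(\ell_i-1)/2}$, which is precisely that endpoint, so they are covered by Lemma~\ref{lemma:incident}. With that disambiguation the argument is complete and coincides with the paper's construction (whose own phrasing, ``a middle vertex of each of the two core paths,'' has the same ambiguity, resolved the opposite way because it keeps $u$ rather than $v$).
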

\begin{proof}
  Let $k=\FES(G)$. If $k=0$, $G$ is a tree and we are done by Theorem~\ref{thm:trees}.

  Assume next that $k=1$, that is, $G$ is a connected unicyclic graph. Since $G$ is not a tree, we have $\EM(G)\geq 2$. 
  
  Let $C$ be the unique cycle of $G$ and let $x$ be a vertex of $C$. If $C$ has even length, $\M(x)\cap E(C)$ consists of all edges of $C$ except the two that are incident to the antipodal vertex of $x$ in $C$. Thus, for any two non-adjacent vertices $x$ and $y$ of $C$, $(\M(x)\cup \M(y))\cap E(C)=E(C)$. By Lemma~\ref{lemma:bridges}, all other edges of $G$ are monitored and thus $\{x,y\}$ is distance-edge-monitoring. 
  
If $C$ has odd length, $\M(x)\cap E(C)$ consists of all edges of $C$ except the one that connects the two vertices that are farthest from $x$. Thus, for any two vertices $x$ and $y$ of $C$, $(\M(x)\cup \M(y))\cap E(C)=E(C)$ and again $\{x,y\}$ is distance-edge-monitoring.

Assume finally that $k=2$ and let $G_b$ be the base graph of $G$. By Lemma~\ref{lemma:basegraph}, $G_b$ contains at most two core vertices joined by at most three core paths. By Lemma~\ref{lemma:bridges}, any non-empty distance-edge-monitoring set of $G_b$ is also one of $G$, since all edges of $G$ not present in $G_b$ are bridges.

If one of the core paths of $G_b$ is a core cycle, then in fact $G_b$ must have two core cycles, with either one or two core vertices. Then, select one vertex from each core cycle that is farthest from the core vertex on this cycle. One can check that these two vertices form a distance-edge-monitoring set. 

Otherwise, $G_b$ consists of two core vertices $x,y$ joined by three parallel core paths. If $x,y$ are not adjacent, then $\{x,y\}$ forms a distance-edge-monitoring set of $G_b$. (Note that $x$ monitors the edges of the three core paths that are closer to $x$, and $y$ monitors the edges closer to $y$.) Otherwise, the edge $xy$ is one of the three core paths. Then, one may select $x$ and a middle vertex of each of the two core paths of length at least~$2$. This forms a distance-edge-monitoring set and completes the proof.
\end{proof}

We will now give a weaker (but similar) bound for any value of $\FES(G)$. This improves on our previous bound from~\cite{CALDAM}.

\begin{theorem}\label{thm:fes}
Let $G$ be a graph with $\FES(G)=k$. If $k\geq 3$, then $\EM(G)\leq 2k-2$.
\end{theorem}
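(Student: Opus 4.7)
By Observation~\ref{obs:EM2}, it suffices to bound $\EM(G_b)$ for the base graph $G_b$ of $G$. By Lemma~\ref{lemma:basegraph}, $G_b$ is a subdivision of a multigraph $H$ with $|V(H)|\le 2k-2$ core vertices and $|E(H)|\le 3k-3$ edges, each corresponding to a core path of $G_b$. A key identity is
\[
\sum_{v\in V(H)}(\deg_H(v)-2)\;=\;2|E(H)|-2|V(H)|\;\le\;2(k-1)\;=\;2k-2,
\]
which equals the target bound and will serve as a ``probe budget''.

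The first step of the plan is to analyse monitoring along a single core path $P=v_0 v_1\cdots v_\ell$ with endpoints $x=v_0$ and $y=v_\ell$. Since the internal vertices of $P$ have degree~$2$ in $G_b$, the formula $d_{G_b}(z,v_j)=\min(d_{G_b}(z,x)+j,\,d_{G_b}(z,y)+\ell-j)$ gives a ``tent'' walk along $P$ whose peak lies at position $j^\ast_z=\tfrac{1}{2}(d_{G_b}(z,y)-d_{G_b}(z,x)+\ell)$. By Lemma~\ref{lemma:M(x)}, $M(z)$ covers every edge of $P$ except possibly the one or two edges incident to $v_{j^\ast_z}$. Specialising to $z\in\{x,y\}$, the two peaks $j^\ast_x$ and $j^\ast_y$ differ by exactly $d_{G_b}(x,y)$: when $d_{G_b}(x,y)\ge 2$, they are sufficiently far apart that $M(x)\cup M(y)$ covers every edge of $P$; when $d_{G_b}(x,y)\le 1$, the peaks can collide and one edge of $P$ is left uncovered by $\{x,y\}$, and by the triangle inequality no probe outside $P$ can repair this—the uncovered edge can be monitored only by a probe placed on $P$ itself. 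We call such $P$ \emph{hard}; concretely, the hard core paths are (i) core cycles (case $x=y$) and (ii) odd-length parallel $xy$-paths inside a substructure with $d_{G_b}(x,y)=1$.

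The second step constructs $M$. Begin with $M:=V(H)$, of size at most $2k-2$; by the first step every non-hard core path is already fully monitored. For each hard core path $P$, add one carefully chosen internal vertex of $P$ to $M$ (for instance the antipode of $x$ on a core cycle, or the midpoint of the missed edge on an odd-length parallel path), \emph{swapping out} one of $P$'s endpoints from $M$ when necessary to stay within budget. The accounting uses the excess-degree identity displayed above: each added internal probe can be charged to one unit of $\deg_H(v)-2$ at an incident endpoint $v$, and whenever an endpoint is swapped out its incident edges are covered by the neighbouring probes along other core paths incident to it; globally, $|M|$ remains at most $2k-2$.

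The main obstacle will be performing this bookkeeping consistently when several hard substructures share core vertices in $H$, so that the swaps and charges must be orchestrated simultaneously: no endpoint may be dropped without its incident edges being covered elsewhere, and no unit of excess degree may be charged twice. The hypothesis $k\ge 3$ is essential here because the excess-degree budget $2k-2\ge 4$ is then strictly larger than the bound $k+1$ that is tight for $k\le 2$, giving just enough slack to absorb the substitutions in the extremal generalised-theta configurations; the cases $k\le 2$ are covered separately by Theorems~\ref{thm:trees} and~\ref{prop:bicyclic}, which give the sharper bound $\FES(G)+1$.
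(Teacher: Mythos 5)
Your setup is sound and closely parallels the paper's own strategy: reduce to the base graph via Observation~\ref{obs:EM2} and Lemma~\ref{lemma:basegraph}, start from the set of core vertices, and observe via Lemma~\ref{lemma:M(x)} that this set can only miss edges at the ``peak'' of a core path whose endpoints are at distance at most~$1$, i.e.\ core cycles and odd-length paths parallel to an edge; your observation that no probe outside such a hard path can monitor its middle edge(s) is also correct. However, there is a genuine gap exactly where you flag ``the main obstacle'': the budget argument is never carried out, and as stated the charging scheme does not close. The identity $\sum_{v}(\deg_H(v)-2)=2k-2$ is fine, but the base set $V(H)$ already consumes one unit of this budget per vertex, so the slack available for \emph{adding} internal probes without swapping is $\sum_{v}(\deg_H(v)-3)$, which vanishes whenever every core vertex has degree~$3$. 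In that regime every hard path must be handled by a swap, and one must then prove simultaneously that (a) hard paths sharing endpoints never force more evictions than the budget allows, and (b) evicting a core vertex $c$ in favour of its neighbour on a hard path never uncovers an edge of another core path incident to $c$. Neither is verified; (b) is in fact where the hypothesis $k\ge 3$ really enters (it guarantees the two endpoints of a surviving hard path are joined by only two core paths, making the swap harmless), not the numerical comparison of $2k-2$ with $k+1$ that you invoke, which is not even strict at $k=3$.

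For comparison, the paper settles the bookkeeping globally rather than by local charging: it iteratively deletes problematic core paths from $G_b$ (as long as each deletion destroys at most one core vertex), obtaining a graph $G'_b$ with $\FES(G'_b)=k-p$ after $p$ deletions; applying Lemma~\ref{lemma:basegraph} to $G'_b$ gives $|C|\le 2(k-p)-2+p=2k-p-2$, so the $p$ added internal probes fit within $2k-2$, and the problematic paths surviving in $G'_b$ are shown to join two adjacent degree-$3$ core vertices joined by exactly two core paths, where the swap is checked to be safe. Some such global argument (or a fully executed simultaneous charging) is the real content of the theorem and is absent from your proposal. A minor point in your favour: you correctly treat \emph{all} core cycles as hard --- for an even core cycle the two edges at the antipode are unmonitored by every core vertex --- a case that must not be overlooked.
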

\begin{proof}
  Let $G_b$ be the base graph of $G$. By Lemma~\ref{lemma:basegraph}, $G_b$ contains at most $2k-2$ core vertices, that are joined by at most $3k-3$ core paths. 
  By Lemma~\ref{lemma:bridges}, any non-empty distance-edge-monitoring set of $G_b$ is also one of $G$, since all edges of $G$ not present in $G_b$ are bridges. Thus, it is sufficient to construct a distance-edge-monitoring set $M$ of $G_b$ of size at most $2k-2$.

  A first candidate would be to select the set $C$ of all core vertices of $G_b$. However, this might not be a distance-edge-monitoring set, indeed for any odd-length core cycle of $G_b$ its middle edge is not monitored, and for any even-length core cycle of $G_b$ the two edges at the opposite of the core vertex are not monitored. Similarly, if $G_b$ contains two distinct core vertices $c_1,c_2$ that are joined by a core path $P$ of odd length at least~$3$ and by another core path of length~$1$ (an edge), then the middle edge of $P$ is not monitored. (Note that every other edge $e$ is monitored, indeed there is one endpoint $x$ of $e$ such that the unique shortest path from $x$ to its closest core vertex goes through $e$.) We call these core paths, \emph{problematic} core paths. Let $\mathcal P$ be the set of problematic core paths. In what follows, we will start from $C$ as a solution set and adjust it to handle problematic core paths.

  To this end, we build a subgraph $G'_b$ of $G_b$, by iteratively selecting a new core path $P$ of $\mathcal P$, and by removing the inner-vertices and edges of $P$ from $G_b$, unless we decrease the number of core vertices of the graph by more than~$1$. When there is no such candidate core path left in $\mathcal P$, we stop the process and obtain the graph $G'_b$. Let $p$ be the number of core paths of $\mathcal P$ that we have removed from $G_b$ in this process, and let $k'=\FES(G'_b)$.

  First, we show that $k'=k-p$. First, to see that $k'\geq k-p$, we show that $k\leq k'+p$. Observe that one can obtain a feedback edge set of $G_b$ from one of $G'_b$ by additionally selecting one edge of each of the $p$ core paths that were deleted from $G_b$ to obtain $G'_b$, which shows that $k\leq k'+p$, as claimed. Second, to see that $k'\leq k-p$, consider any feedback edge set $F$ of $G_b$, and note that it necessarily contains one edge of each core cycle of $G_b$. Moreover, for every pair $c_1,c_2$ of distinct core vertices of $G_b$ joined by at least~$2$ core paths, it contains an edge of all of them, except possibly one. In fact we may assume that $F$ is chosen so that it contains an edge of all of them that have length at least~$2$. Thus, every problematic core path of $G_b$ has an edge in $F$. Removing these $p$ edges from $F$, we obtain the desired feedback edge set of $G'_b$ of size $k-p$.

 Let $C'$ be the set of core vertices of $G'_b$. Since at each step of the procedure used for building $G'_b$, we decrease the number of core vertices by at most~$1$, $G'_b$ has at least $|C|-p$ core vertices. Thus, since $k'=k-p$ and applying Lemma~\ref{lemma:basegraph} to $G'_b$, we have:

  $$
  \begin{array}{rcl}
    |C| & \leq & |C'|+p \\
    & \leq & 2k'-2+p\\
    & = & 2(k-p)-2+p\\
    & = & 2k-p-2
  \end{array}
  $$  

  We now build our distance-edge-monitoring set $M$ as follows. First, we let $M=C$. Moreover, for each problematic core path $P$ that was deleted to obtain $G'_b$, we add one arbitrary inner-vertex of $P$ to $M$. At this point, $M$ monitors all edges, except those of the problematic paths of $G_b$ that were not deleted when constructing $G'_b$. Observe that in $G'_b$, each such core path $P$ joins two distinct adjacent core vertices $c_1,c_2$ of $G'_b$, both having degree exactly~$3$ in $G'_b$. (Moreover, since $k\geq 3$, $c_1$ and $c_2$ are joined only by two core paths in $G'_b$.) We select an arbitrary core vertex among $c_1$ and $c_2$ (say $c_1$), remove it from $M$, and replace it with the neighbour of $c_1$ on $P$. 
  We do this for every such remaining problematic core path. Again, since $k\geq 3$, now every problematic core path conatains a vertex of $M$. 

It is clear that $M$ has size at most $2k-2$. We now show that $M$ is distance-edge-monitoring. To do so, consider an edge $e$ of $G_b$, which lies on some core path $P$. If $P$ is not a problematic path or $e$ is not the middle edge of $P$, then similarly as for the set $C$, there is an endpoint $x$ of $e$ whose unique shortest path to one of the core vertices $c$ of $P$ (or the neighbour of $c$, if $c$ was removed from $M$) goes through $e$, and so $e$ is monitored by $c$ or its neighbour. For the remainder, assume that $P$ is problematic and that $e$ is the middle edge of $P$. If $P$ was among the $p$ paths from which we added an extra vertex $x$ to $M$, then $x$ monitors $e$. Otherwise, we have removed one of the core vertices of $P$, and replaced it with its neighbour on $P$. Then, this neighbour monitors $e$. This completes the proof.
\end{proof}

 We do not believe that the bound of Theorem~\ref{thm:fes} is tight. There are examples of graphs $G$ where $\EM(G)=\FES(G)+1$, for example this is the case for the grid $G_{a,2}$: by Theorem~\ref{thm:grids}, when $a\geq 2$ we have $\EM(G_{a,2})=a$, and $\FES(G_{a,2})=a-1$.

 Note that $G_{a,b}$ for $a,b\geq 2$ provides examples of a family of graphs where for increasing $a,b$ the difference between $\FES(G_{a,b})$ and $\EM(G_{a,b})$ is unbounded by a constant. Indeed, by Theorem~\ref{thm:grids}, we have $\EM(G_{a,b})=\max\{a,b\}$, but $\FES(G_{a,b})$ is linear in the order $ab$.

\section{Graphs $G$ with $\EM(G)=2$}\label{sec:dem=2}

In this section, we characterize connected graphs $G$ with $\EM(G)=2$.

For two vertices $u,v$ of a graph $G$ and two non-negative integers $i,j$, we denote by $B_{i,j}(u,v)$ the set of vertices at distance $i$ from $u$ and distance $j$ from $v$ in $G$.

\begin{observation}
Let $G$ be a graph, and let $u,v$ be two vertices in $G$. Let $i,j$ be two non-negative integers such that $B_{i,j}(u,v) \neq \emptyset$. Then, $|i-j|\leq d_G(u,v)$. 
\end{observation}

\begin{proof}
Let $x \in B_{i,j}(u,v)$. According to the definition of $B_{i,j}(u,v)$, $d_G(u,x)$ = $i$ and $d_G(v,x)$ = $j$. According to the triangle inequality, it holds that $i \leq d_G(u,v) + j$. Hence, $i - j \leq d_G(u,v)$. Similarly, it holds that $j \leq d_G(u,v) + i$. Hence, $j - i \leq d_G(u,v)$. Overall, $|i-j|\leq d_G(u,v)$.
\end{proof}

\begin{observation}\label{obs:i'j'}
Let $G$ be a graph, and let $u,v$ be two vertices in $G$. Let $i,j$ be two non-negative integers such that $B_{i,j}(u,v) \neq \emptyset$. Let $x \in B_{i,j}(u,v)$, and let $y \in B_{i',j'}(u,v)$ be a neighbor of $x$. Then, $i'\in \{i-1, i, i+1\}$ and $j'\in \{j-1, j, j+1\}$. 
\end{observation}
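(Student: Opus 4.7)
The plan is to derive the statement directly from the triangle inequality applied to the distance function in $G$. By definition, $x \in B_{i,j}$ means $d_G(u,x) = i$ and $d_G(v,x) = j$, and similarly $y \in B_{i',j'}$ means $d_G(u,y) = i'$ and $d_G(v,y) = j'$.

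First I would observe that since $xy$ is an edge of $G$, any shortest $u$–$x$ path can be extended by the edge $xy$ to obtain a walk from $u$ to $y$ of length $i+1$, which gives $i' \leq i+1$. Symmetrically, a shortest $u$–$y$ path extended by $yx$ yields a walk of length $i'+1$ from $u$ to $x$, so $i \leq i'+1$. Combining these two inequalities gives $|i-i'|\leq 1$, which is exactly $i' \in \{i-1, i, i+1\}$.

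The same argument applied with $v$ in place of $u$ (and $j, j'$ in place of $i, i'$) yields $|j-j'|\leq 1$, hence $j' \in \{j-1, j, j+1\}$. This completes the proof, and there is no real obstacle here: the statement is essentially just the one-step Lipschitz property of graph distance from a fixed vertex, invoked twice (once for the basepoint $u$ and once for $v$).
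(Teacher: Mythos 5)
Your argument is correct: it is the standard one-step Lipschitz property of graph distance, applied once for the basepoint $u$ and once for $v$. The paper states this as an observation without any proof, and your two triangle-inequality bounds ($i'\leq i+1$ and $i\leq i'+1$, and symmetrically for $j,j'$) are exactly the justification one would supply.
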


\begin{proof}
Let $x \in B_{i,j}(u,v)$ and let $y \in B_{i',j'}(u,v)$ be a neighbor of $x$. Because $y$ is a neighbor of $x$, the distances from $u$ to $y$ and $x$ can only differ by at most 1. Hence, $i'\in \{i-1, i, i+1\}$. Likewise, because $y$ is a neighbor of $x$, the distances from $v$ to $y$ and $x$ can only differ by at most 1. Hence, $j'\in \{j-1, j, j+1\}$.
\end{proof}
We are now ready to state our characterization in the next theorem. An example of a graph $G$ with $\EM(G)=2$ following the characterization is given in Figure~\ref{fig:example2}.

\begin{theorem}\label{thm:dem2}
Let $G$ be a connected graph with at least one cycle, and let $G_b$ be the base graph of $G$. Then, $\EM(G) = 2$ if and only if there are two vertices $u,v$ in $G_b$ such that all of the following conditions (I) -- (IV) hold in $G_b$.
 	\begin{itemize}

 	\item[(I)] $\forall i,j \in \{0,1,2,\dots\}$: $B_{i,j}(u,v)$ is an independent set.

	\item[(II)] $\forall i,j \in \{1,2,3,\dots\}$: Every vertex $x$ in $B_{i,j}(u,v)$ has at most one neighbor in each of the four sets $B_{i-1,j}(u,v)\cup B_{i-1,j-1}(u,v)$, $B_{i-1,j}(u,v)\cup B_{i-1,j+1}(u,v)$, $B_{i,j-1}(u,v)\cup B_{i-1,j-1}(u,v)$ and $B_{i,j-1}(u,v)\cup B_{i+1,j-1}(u,v)$. 

	\item[(III)] $\forall i,j \in \{1,2,3,\dots\}$: There is no 4-vertex path $zxyz'$ with $z\in B_{i-1,a}(u,v)$, $z'\in B_{a',j}(u,v)$, $x\in B_{i,j}(u,v)$, $y\in B_{i-1,j+1}(u,v)$, $a\in \{j-1,j+1\}$, $a'\in \{i-2,i\}$.
	
	\item[(IV)] $\forall i,j \in \{1,2,3,\dots\}$: $x\in B_{i,j}(u,v)$ has neighbors in at most two sets among $B_{i-1,j+1}(u,v)$, $B_{i-1,j-1}(u,v)$, $B_{i+1,j-1}(u,v)$. 
		 	
	\end{itemize}

\end{theorem}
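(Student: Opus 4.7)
By Observation~\ref{obs:EM2}, $\EM(G)=\EM(G')$, and since $G$ contains a cycle, Theorem~\ref{thm:trees} gives $\EM(G')\geq 2$. So the task is to characterise when some pair $\{u,v\}\subseteq V(G')$ is a distance-edge-monitoring set. The main tool throughout is Lemma~\ref{lemma:M(x)}: an edge $xy$ with $x\in B_{i,j}$ and $y\in B_{i',j'}$ is monitored by $u$ exactly when $|i-i'|=1$ and the lower-level endpoint is the unique neighbour of the higher-level endpoint in the previous $u$-level (similarly for $v$). By Observation~\ref{obs:i'j'} and condition~(I), the ``type'' $\Delta=(i'-i,j'-j)$ of any edge lies in $\{-1,0,1\}^2\setminus\{(0,0)\}$, and up to swapping $x$ and $y$ the edges split into four classes: $\Delta\in\{(-1,0),(0,-1),(-1,-1),(-1,1)\}$.

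For the sufficiency direction, I would assume $u,v$ satisfy (I)--(IV) and show every edge $xy$ is monitored by treating the four types in turn. For $\Delta=(-1,0)$, the endpoint $y\in B_{i-1,j}$ lies in both sets $B_{i-1,j}\cup B_{i-1,j-1}$ and $B_{i-1,j}\cup B_{i-1,j+1}$ from~(II) applied at $x$, so~(II) pins $y$ as the unique neighbour of $x$ in $L_{i-1}(u)$ and $u$ monitors $xy$; the case $\Delta=(0,-1)$ is symmetric. For $\Delta=(-1,-1)$, (II) at $x$ restricts any $u$-monitoring spoiler to $B_{i-1,j+1}$ and any $v$-monitoring spoiler to $B_{i+1,j-1}$, so if both existed $x$ would have neighbours in all three sets of~(IV), contradicting (IV). For $\Delta=(-1,1)$, (II) at $x$ forces any $u$-spoiler $z$ into $B_{i-1,j-1}$, and (II) at $y$ (which already contains $x$ in its analogue of~$A_4$) forces any $v$-spoiler $z'$ into $B_{i-2,j}$; together $zxyz'$ is a path of the shape forbidden by~(III).

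For the necessity direction, I would suppose $\{u,v\}$ is distance-edge-monitoring and prove each condition by contrapositive. A violation of~(I) exhibits an edge inside some $B_{i,j}$, whose two endpoints share the same $u$- and $v$-levels and hence cannot be monitored. A violation of~(II), say two neighbours of $x$ in $B_{i-1,j}\cup B_{i-1,j-1}$, yields (via a brief split on which of these two cells contains the neighbours) an incident edge of $x$ that has no unique $u$-predecessor and whose endpoints share a $v$-level; the other three sets follow by the same argument applied to the symmetric positions. A violation of~(III) directly gives an edge $xy$ on which $z$ obstructs $u$-monitoring and $z'$ obstructs $v$-monitoring, and a violation of~(IV) produces an unmonitored edge from $x$ to its neighbour in $B_{i-1,j-1}$, obstructed on the $u$-side by the neighbour in $B_{i-1,j+1}$ and on the $v$-side by the one in $B_{i+1,j-1}$. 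The main subtlety sits in the diagonal sufficiency cases $\Delta=(-1,\pm 1)$, where (II) must be applied at both endpoints of the edge to confine potential spoilers into exactly the cells appearing in~(III) and~(IV); the apparently redundant choices of $a$ and $a'$ in~(III) are ruled out by~(II) at $x$ and $y$ respectively, which is what makes the bookkeeping close up.
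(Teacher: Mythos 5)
Your proposal is correct and follows essentially the same route as the paper: both directions rest on Lemma~\ref{lemma:M(x)} together with Observations~\ref{obs:EM2} and~\ref{obs:i'j'}, the necessity of each of (I)--(IV) is shown by exhibiting the same unmonitored edge, and your sufficiency case split by edge type $\Delta$ coincides with the paper's Cases (a)--(c) and their symmetric variants. The only (harmless) divergence is in the $\Delta=(-1,1)$ case, where you apply (II) at $y$ to pin the $v$-spoiler $z'$ into $B_{i-2,j}$ exactly, whereas the paper only excludes $a'=i-1$ and lets (III) dispose of both remaining values $a'\in\{i-2,i\}$; both arguments close.
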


\begin{proof}
$\Rightarrow$
Let us assume $\EM(G) = 2$, hence $\EM(G_b)=2$ according to Observation~\ref{obs:EM2}. Let $\{u,v\}$ be a distance-edge-monitoring set of $G_b$. We will show that the properties (I) - (IV) hold. 

 	\begin{itemize}

 	\item[(I)] Assume that $B_{i,j}(u,v)$ is not an independent set. Let $e=xy$  be an edge such that $x,y \in B_{i,j}(u,v)$. Then, the distance from $x$ to $u$ is $i$ and the distance from $y$ to $u$ is $i$, hence $e$ is not monitored by $u$, according to Lemma~\ref{lemma:M(x)}. Likewise, $e$ is not monitored by $v$, a contradiction. 

	\item[(II)] Let $x \in B_{i,j}(u,v)$.
		\begin{itemize}
			\item[(i)]Assume $x$ has two neighbors $y,y' \in B_{i-1,j}(u,v) \cup B_{i-1,j-1}(u,v)$. Assume first that $x$ has two neighbors $y,y'$ such that either both $y,y' \in B_{i-1,j}(u,v)$, both $y,y'\in B_{i-1,j-1}(u,v)$, or $y \in B_{i-1,j}(u,v)$, $y' \in B_{i-1,j-1}(u,v)$. Let $e=xy$. Then, the distance from $y$ to $u$ and the distance from $y'$ to $u$ are both $i-1$, hence $e$ is not monitored by $u$, according to Lemma~\ref{lemma:M(x)}. The distance from $y$ to $v$ and the distance from  $x$ to $v$ are also the same, hence $e$ is not monitored by $v$, according to Lemma~\ref{lemma:M(x)}. This is a contradiction.
			
			
			\item[(ii)]Assume $x$ has two neighbors $y,y' \in B_{i-1,j}(u,v) \cup B_{i-1,j+1}(u,v)$. If both $y,y'\in B_{i-1,j}(u,v)$ we are done by Case~(i). Thus, assume that $y\in B_{i-1,j}(u,v) \cup B_{i-1,j+1}(u,v)$ and $y'\in B_{i-1,j+1}(u,v)$, and let $e=xy$. If $y,y' \in B_{i-1,j+1}(u,v)$, as in Case~(i), the ordered pair of distances from $y$ to $u$ and $v$ and the ordered pair of distances from $y'$ to $u$ and $v$ are the same, and by Lemma~\ref{lemma:M(x)} $e$ is not monitored. If $y \in B_{i-1,j}(u,v)$, $y' \in B_{i-1,j+1}(u,v)$, the same argument works to show that $u$ does not monitor $e$. Moreover, the distances from $x$ to $v$ and from $y$ to $v$ are both $j$, and by Lemma~\ref{lemma:M(x)} $v$ also does not monitor $e$. This is a contradiction.
			
			\item[(iii)]The cases where $x$ has two neighbors in $B_{i,j-1}(u,v) \cup B_{i-1,j-1}(u,v)$ or in $B_{i,j-1}(u,v) \cup B_{i+1,j-1}(u,v)$ are completely symmetric to Cases~(i) and~(ii).

		\end{itemize}
	\item[(III)] Assume there is a path $zxyz'$ with $z\in B_{i-1,a}(u,v)$, $z'\in B_{a',j}(u,v)$, $x\in B_{i,j}(u,v)$, $y\in B_{i-1,j+1}(u,v)$, $a\in \{j-1,j+1\}$, $a'\in \{i-2,i\}$, and let $e=xy$, then the distance from $u$ to $z$ is $i-1$, and the distance from $u$ to $y$ is $i-1$, hence by Lemma~\ref{lemma:M(x)} $e$ is not monitored by $u$. The distance from $v$ to $z'$ is $j$, and the distance from $v$ to $x$ is $j$, hence $e$ is also not monitored by $v$ according to Lemma~\ref{lemma:M(x)}, a contradiction.
	\item[(IV)]
		
			Let us assume by contradiction that $x \in B_{i,j}(u,v)$ has three neighbors $y,y',z$ such that $y \in B_{i-1,j-1}(u,v)$, $y' \in B_{i+1,j-1}(u,v)$, and $z \in B_{i-1,j+1}(u,v)$, and let $e=xy$. The distance from $z$ to $u$ is $i-1$, and the distance from $y$ to $u$ is $i-1$, hence $e$ is not monitored by $u$. The distance from $y$ to $v$ is $j-1$, and the distance from $y'$ to $v$ is $j-1$, hence $e$ is not monitored by $v$, according to Lemma~\ref{lemma:M(x)}, a contradiction.  
	\end{itemize}

\noindent $\Leftarrow$  Let us assume that there are two vertices $u,v$ in $G_b$ such that all of the conditions (I) - (IV) hold in $G_b$. We will show that $\{u,v\}$ is a distance-edge-monitoring set in $G_b$, and hence $\EM(G)=2$ according to Observation~\ref{obs:EM2}. Let $e=xy$ be an edge in $G$ such that $x \in B_{i,j}(u,v)$. According to condition (I), $y \notin B_{i,j}(u,v)$. We will distinguish three cases with respect to $y$. All the other cases are symmetric to these three main cases.  
	
	\begin{itemize}
	\item[Case(a):] $y \in B_{i-1,j-1}(u,v)$. (The case $y \in B_{i+1,j+1}(u,v)$ is symmetric.)

	By contradiction, assume $e$ is not monitored. Then, there is a path $P_i$ of length $i$ from $x$ to $u$ not using $e$, and there is a path $P_j$ of length $j$ from $x$ to $v$ not using $e$. Let $z$ be the neighbor of $x$ in $P_i$. Then, $z$ belongs to $B_{i-1,a}(u,v)$. Let $z'$ be the neighbor of $x$ in $P_j$. Then, $z'$ belongs to $B_{a',j-1}(u,v)$. According to Observation~\ref{obs:i'j'} and condition (II), $a=j+1$ and $a'=i+1$. Hence, $x$ has neighbors in all three sets $B_{i-1,j+1}(u,v)$, $B_{i-1,j-1}(u,v)$, $B_{i+1,j-1}(u,v)$, in contradiction to condition (IV). 

	\item[Case(b):] $y \in B_{i,j-1}(u,v)$. (The cases $y \in B_{i-1,j}(u,v)$, $B_{i+1,j}(u,v)$ or $B_{i,j+1}(u,v)$ are symmetric.)
	
	By contradiction, assume $e$ is not monitored. Then, there is a path $P_j$ of length $j$ from $x$ to $v$ not using $e$. Let $z'$ be the neighbor of $x$ in $P_j$. Then, $z'$ belongs to $B_{a',j-1}(u,v)$. As $a' \in \{i-1, i, i+1\}$, this contradicts condition (II).

	\item[Case(c):] $y \in B_{i-1,j+1}(u,v)$. (The case $y \in B_{i+1,j-1}(u,v)$ is symmetric.)
	
	By  contradiction, assume $e$ is not monitored. Then, there is a path of length $i$ from $x$ to $u$ avoiding $y$ (let $z$ be the neighbor of $x$ on this path). Similarly, there is a path of length $j+1$ from $y$ to $v$ avoiding $x$ (let $z'$ be the neighbor of $y$ on this path). Thus, there is a 4-vertex path $zxyz'$ with $z \in B_{i-1, a}(u,v)$, $z' \in B_{a',j}(u,v)$, $a \in \{j-1,j,j+1\}$, $a' \in \{i-2,i-1,i\}$. According to condition (II), $a \neq j$ and $a' \neq i-1$. However, the other cases ($a \in \{j-1,j+1\}$ and $a' \in \{i-2,i\}$) contradict condition (III).
	\end{itemize}
\end{proof}

\begin{figure}[!htpb]
\centering
\scalebox{1}{\begin{tikzpicture}[every loop/.style={},scale=1]
  \node[blackvertex,label={180:$u$}] (u) at (0,0) {};
  \node[whitevertex] (b13) at (1,0) {};
  \node[whitevertex] (b13') at (1,0.5) {};
  \node[whitevertex] (b22) at (2,0) {};
  \node[whitevertex] (b31) at (3,0) {};
  \node[whitevertex] (b31') at (3,0.5) {};
  \node[blackvertex,label={0:$v$}] (v) at (4,0) {};
  \node[whitevertex] (b14) at (0,1) {};
  \node[whitevertex] (b24) at (0,2) {};
  \node[whitevertex] (b33) at (2,2) {};
  \node[whitevertex] (b33') at (2,2.5) {};
  \node[whitevertex] (b32) at (2.75,2.75) {};
  \node[whitevertex] (b41) at (3.5,2) {};
  \node[whitevertex] (b41') at (4,2) {};  

  \draw[thick] (b14) -- (b13) -- (u) -- (b14) -- (b24) -- (b33') -- (b32) -- (b22) -- (b33) -- (b24) (b13) -- (b22) -- (b31') -- (b32) (u) -- (b13') -- (b22) -- (b31) -- (v) -- (b31') -- (b41) -- (v) -- (b41') -- (b31');

\draw[line width=0.8pt] (b14) ellipse (0.5cm and 0.3cm);
\draw[line width=0.8pt] (b24) ellipse (0.5cm and 0.3cm);  
\draw[line width=0.8pt] (b22) ellipse (0.3cm and 0.5cm);
\draw[line width=0.8pt] (1,0.25) ellipse (0.3cm and 0.7cm);
\draw[line width=0.8pt] (2,2.25) ellipse (0.3cm and 0.7cm);
\draw[line width=0.8pt] (b32) ellipse (0.3cm and 0.5cm);
\draw[line width=0.8pt] (3.75,2) ellipse (0.7cm and 0.3cm);
\draw[line width=0.8pt] (3,0.25) ellipse (0.3cm and 0.7cm);

\node[xshift=-9mm] at (b14) {$B_{1,4}$};
\node[xshift=-9mm] at (b24) {$B_{2,4}$};
\node[yshift=12mm] at (b33) {$B_{3,3}$};
\node[yshift=8mm] at (b32) {$B_{3,2}$};
\node[xshift=14mm] at (b41) {$B_{4,1}$};
\node[yshift=-8mm] at (b13) {$B_{1,3}$};
\node[yshift=-8mm] at (b22) {$B_{2,2}$};
\node[yshift=-8mm] at (b31) {$B_{3,1}$};

\end{tikzpicture}}
\caption{A (base) graph $G$ with $\EM(G)=2$, where $\{u,v\}$ is a distance-edge-monitoring set. For easier readability, we use the notation $B_{i,j}$ for $B_{i,j}(u,v)$.}\label{fig:example2}
\end{figure}
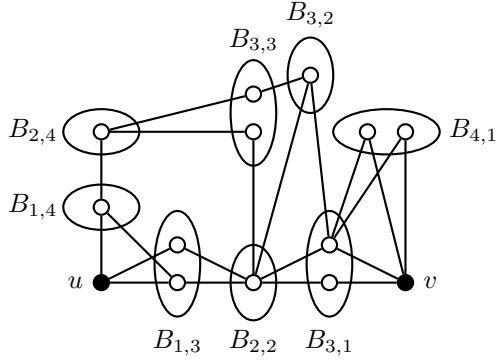

\section{Complexity}\label{sec:complex}

We now study the algorithmic complexity of finding an optimal distance-edge-monitoring set.

A \emph{$c$-approximation algorithm} for a given optimization problem is an algorithm that returns a solution whose size is always at most $c$ times the optimum. We refer to the books~\cite{ACGKMP99,Hro04} for more details.
For a decision problem $\Pi$ and for some parameter $p$ of the instance, an algorithm for $\Pi$ is said to be \emph{fixed parameter tractable} (fpt for short) if it runs in time $f(p)n^c$, where $f$ is a computable function, $n$ is the input size, and $c$ is a constant. In this paper, we will always consider the solution size $k$ as the parameter. The class FPT contains the parameterized decisions problems solvable by an fpt algorithm. The classes W[$i$] (with $i\geq 1$) denote complexity classes with parameterized decision problems that are believed not to be fpt. We refer to the books~\cite{DF13,N06} for more details.

  \subsection{NP-hardness}

We will now use the connection to vertex covers hinted in Section~\ref{sec:bounds} to derive an NP-hardness result. For two graphs $G$, $H$, $G\bowtie H$ denotes the graph obtained from disjoint copies of $G$ and $H$ with all possible edges between $V(G)$ and $V(H)$. We denote by $K_1$ the graph on one single vertex.

\begin{theorem}\label{thm:VC-universal-vertex}
For any graph $G$, we have $\VC(G)\leq\EM(G\bowtie K_1)\leq\VC(G)+1$. Moreover, if $G$ has radius at least $4$, then $\VC(G)=\EM(G\bowtie K_1)$.
\end{theorem}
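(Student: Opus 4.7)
Write $H=G\bowtie K_1$ and let $w$ be the added universal vertex. I would attack the three inequalities $\VC(G)\leq\EM(H)\leq\VC(G)+1$ and (under the radius hypothesis) $\EM(H)\leq\VC(G)$ in turn.

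The upper bound $\EM(H)\leq\VC(G)+1$ is straightforward: take a minimum vertex cover $C$ of $G$ and let $M=C\cup\{w\}$. Every edge of $G$ has an endpoint in $C$, and every edge of $H$ not in $G$ is incident to $w$, so Lemma~\ref{lemma:incident} immediately shows that $M$ is distance-edge-monitoring.

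For the lower bound $\VC(G)\leq\EM(H)$ I would perform a BFS-level analysis of $H$. From any $x\in V(G)$, the levels in $H$ are $L_0(x)=\{x\}$, $L_1(x)=N_G(x)\cup\{w\}$, and $L_2(x)=V(G)\setminus N_G[x]$; in particular every vertex of $L_2(x)$ has $w$ as a level-$1$ neighbour. By Lemma~\ref{lemma:M(x)} this forces $M(x)$ to contain no edge of $G$ other than those incident to $x$: an edge with both endpoints in $N_G(x)$ lies inside $L_1(x)$, while for an edge with one endpoint in $N_G(x)$ and the other in $L_2(x)$, the level-$2$ endpoint has $w$ as a second level-$1$ neighbour, killing the uniqueness required by the lemma. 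For $x=w$ the situation is even simpler, since all of $V(G)$ lies in $L_1(w)$, so $M(w)$ contains no edge of $G$ at all. Consequently, in any distance-edge-monitoring set $M$ of $H$, the set $M\cap V(G)$ must be a vertex cover of $G$, giving $\VC(G)\leq|M\cap V(G)|\leq|M|$.

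The improvement to $\EM(H)\leq\VC(G)$ under the assumption that $G$ has radius at least~$4$ is the main point. I would show that a minimum vertex cover $C$ of $G$ is already distance-edge-monitoring in $H$, so that the extra $w$ is unnecessary. Edges of $G$, and edges $wu$ with $u\in C$, are handled as before by Lemma~\ref{lemma:incident}. For an edge $wu$ with $u\notin C$, I claim that any $x\in C$ with $d_G(u,x)\geq 3$ monitors it: then $u\in L_2(x)$, and since $u$ and $x$ share no common neighbour in $G$ (otherwise $d_G(u,x)\leq 2$), $w$ is the \emph{unique} level-$1$ neighbour of $u$ in $H$, so $wu\in M(x)$ by Lemma~\ref{lemma:M(x)}. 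To exhibit such an $x$ I would invoke the radius hypothesis to pick $v\in V(G)$ with $d_G(u,v)\geq 4$; if $v\in C$ take $x=v$, otherwise $v$ lies in the independent set $V(G)\setminus C$, so every neighbour $v'$ of $v$ belongs to $C$ and satisfies $d_G(u,v')\geq d_G(u,v)-1\geq 3$.

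The main obstacle is pinning down exactly why radius~$4$ is needed: the ``step back into $C$'' trick loses one unit of distance, so a radius-$3$ hypothesis would only guarantee $x\in C$ with $d_G(u,x)\geq 2$, which is not enough to forbid $u$ and $x$ from sharing a common neighbour and destroying the uniqueness required by Lemma~\ref{lemma:M(x)}. This is precisely where the threshold of~$4$ comes from.
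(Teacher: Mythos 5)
Your proposal is correct and follows essentially the same route as the paper: the same vertex-cover upper bound, the same level/unique-shortest-path argument showing that only endpoints of an edge of $G$ can monitor it in $G\bowtie K_1$, and the same use of the radius hypothesis to find a cover vertex at distance at least $3$ from the uncovered endpoint of an edge at $w$ (the paper picks an edge between a distance-$4$ vertex and a distance-$3$ neighbour and uses that one of its endpoints is in the cover, where you use that the complement of the cover is independent --- an equivalent step).
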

\begin{proof}
We denote by $u$ the vertex from $K_1$. By Theorem~\ref{thm:VC}, we have $\EM(G\bowtie K_1)\leq\VC(G\bowtie K_1)\leq\VC(G)+1$ since $u$ together with any vertex cover of $G$ is a vertex cover of $G\bowtie K_1$.

To see that $\VC(G)\leq\EM(G\bowtie K_1)$, assume that we have a distance-edge-monitoring set $M$ of $G\bowtie K_1$, and let $vw$ be any edge of $G$. There exists a vertex $x$ in $M$ such that $vw\in \M(x)$. Then, we must have $x\in\{v,w\}$, indeed, if not, then $vw$ would be part of a unique shortest path of length~$2$ from $x$ to some vertex $y$, which is not possible since the path from $x$ to $y$ going through $u$ is another path of length~$2$.

Now, assume that $G$ has radius at least~$4$: it remains to prove that $\EM(G\bowtie K_1)\leq\VC(G)$. Let $M$ be a vertex cover of $G$. By Lemma~\ref{lemma:incident}, all edges incident with a vertex of $M$ are monitored by $M$, which includes all (original) edges of $G$. Let $e=uv$ be an edge incident with the vertex $u$ of $K_1$ such that $v\notin M$. We claim that there is a vertex $x$ of $M$ with $d_G(v,x)\geq 3$. Indeed, since $G$ has radius at least $4$, there is a vertex at distance~$4$ of $v$ in $G$. Consider a neighbour of that vertex that is itself at distance~$3$ of $v$ in $G$. Since $M$ is a vertex cover, one of these two vertices is in $M$, which proves the existence of $x$. Now, we know that $d_{G\bowtie K_1}(v,x)=2$, and the only path of length~$2$ from $v$ to $x$ goes through $e$. Thus $e\in \M(x)$. This shows that all edges are monitored by $M$, and completes the proof.
\end{proof}


\begin{corollary}\label{cor:NP-h}
\EMdec is NP-complete, even for graphs obtained from a planar subcubic graph by attaching a universal vertex.
\end{corollary}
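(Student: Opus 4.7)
The plan is to establish NP-hardness by a polynomial-time reduction from \textsc{Vertex Cover} restricted to connected planar subcubic graphs, a problem known to be NP-complete by the classical result of Garey, Johnson and Stockmeyer, and then to invoke Theorem~\ref{thm:VC-universal-vertex}. Membership in NP is straightforward: given a candidate set $M$, one verifies in polynomial time that every edge of $G$ is monitored, by computing the level sets $L_i(x)$ for each $x\in M$ and applying Lemma~\ref{lemma:M(x)}.

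Given an instance $(G,k)$ of this restricted \textsc{Vertex Cover} with $G$ connected, planar and subcubic, I would pick an arbitrary edge $uw$ of $G$ and subdivide it $2m$ times (for a fixed constant, say $m=4$) to replace $uw$ by a path $u-v_1-v_2-\cdots-v_{2m}-w$. Let $G'$ denote the resulting graph. Subdivision preserves planarity and connectivity, and the new internal vertices have degree $2$, so $G'$ is still connected, planar and subcubic. A short case analysis on the type of vertex (original vertex of $G$, endpoint $u$ or $w$, or internal vertex $v_i$) confirms that for $m=4$ every vertex of $G'$ has eccentricity at least $4$: the middle subdivision vertex sits at distance $\geq 4$ from both ends of the new path, and every other vertex is at distance $\geq 4$ either from that middle vertex or from one of the endpoints of the subdivided path. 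Hence the radius of $G'$ is at least $4$.

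The key arithmetic step will be to prove $\VC(G')=\VC(G)+m$. For the upper bound, any minimum vertex cover $C$ of $G$ contains at least one of $u,w$ (since $uw\in E(G)$); adding the $m$ even-indexed subdivision vertices if $u\in C$ (or the $m$ odd-indexed ones if $w\in C$) produces a vertex cover of $G'$ of size $\VC(G)+m$. For the lower bound, starting from any vertex cover $C'$ of $G'$, a case analysis on $a=|C'\cap\{u,w\}|\in\{0,1,2\}$ shows that at least $m+1-a$ of the internal vertices $v_1,\dots,v_{2m}$ lie in $C'$; removing these from $C'$ (and adding one of $u,w$ when $a=0$ in order to cover the edge $uw$ of $G$) yields a vertex cover of $G$ of size at most $|C'|-m$. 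I expect this case analysis to be the main technical obstacle of the proof.

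With these ingredients, Theorem~\ref{thm:VC-universal-vertex} applied to $G'$ (whose radius is at least $4$) gives $\EM(G'\bowtie K_1)=\VC(G')=\VC(G)+m$. Consequently $(G,k)$ is a yes-instance of \textsc{Vertex Cover} iff $(G'\bowtie K_1,k+m)$ is a yes-instance of \EMdec. The reduction clearly runs in polynomial time, and the output $G'\bowtie K_1$ is obtained from the planar subcubic graph $G'$ by attaching a universal vertex, matching exactly the graph class in the statement of the corollary.
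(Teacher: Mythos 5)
Your proposal is correct in substance but takes a genuinely different, and more laborious, route than the paper on the one point where the two differ: how to guarantee the radius-at-least-$4$ hypothesis of Theorem~\ref{thm:VC-universal-vertex}. The paper simply observes that a subcubic graph of radius at most~$3$ has constant order (at most $1+3+6+12=22$ vertices), so \textsc{Vertex Cover} remains NP-complete on planar subcubic graphs \emph{of radius at least $4$} for free, and no gadget is needed. You instead enlarge the radius explicitly by subdividing one edge $2m$ times and proving $\VC(G')=\VC(G)+m$; this is a valid classical padding argument (each double subdivision of an edge raises the vertex cover number by exactly one), and it yields the same conclusion, at the cost of an extra lemma that the paper avoids entirely. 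One small slip in your sketch of that lemma: in the case $a=|C'\cap\{u,w\}|=2$ your bound of ``at least $m+1-a=m-1$ internal vertices'' is too weak to conclude $|C'|-m$; the correct count is that the path on $v_1,\dots,v_{2m}$ always forces at least $m$ internal vertices into $C'$ (and $m+1$ when $a=0$), which does give the desired $\VC(G)\leq|C'|-m$ in every case. With that repair, and with the NP-membership and radius computations you give, the argument is complete; it is just strictly more work than the paper's one-line observation.
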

\begin{proof}
It is known (see~\cite{GJ77}) that \textsc{Vertex Cover} is NP-complete for planar subcubic graphs with radius at least~$4$ (for the latter condition, note that a subcubic graph of radius at most~$3$ has constant order). Theorem~\ref{thm:VC-universal-vertex} provides a polynomial reduction from such instances, and this completes the proof.
\end{proof}

\subsection{Approximation algorithm}

Given a hypergraph $H=(X,S)$ with vertex set $X$ and edge set $S$, a \emph{set cover} of $H$ is a subset $C\subseteq S$ of edges such that each vertex of $X$ belongs to at least one edge of $C$. We will provide reductions for \EMopt to and from \SCdec and \SCopt.

\decisionpb{\SCdec}{A hypergraph $H=(X,S)$, an integer $k$.}{Is there a set cover $C\subseteq S$ of size at most $k$?}{0.9}

\optimpb{\SCopt}{A hypergraph $H=(X,S)$.}{Build a smallest possible set cover of $H$.}{0.9}

It is known that \SCopt is approximable in polynomial time within a factor of $\ln(|X|+1)$~\cite{J74}, but, unless P$=$NP, not within a factor of $(1-\epsilon)\ln |X|$ (for every positive $\epsilon$)~\cite{DS14}. Moreover, \SCdec is W[2]-hard (when parameterized by $k$) and not solvable in time $|X|^{o(k)}|H|^{O(1)}$ unless FPT$=$W[1]~\cite{n^o(k)}.

\begin{theorem}
\EMopt is approximable within a factor of $\ln(|E(G)|+1)$ in polynomial time.
\end{theorem}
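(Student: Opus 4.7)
The plan is to reduce \EMopt to \SCopt and use the classical greedy logarithmic approximation of Johnson~\cite{J74}. The reduction is essentially built into the definitions of Section~\ref{sec:prelim}: a distance-edge-monitoring set is nothing but a cover of $E(G)$ by sets of the form $M(x)$, one for each vertex $x\in V(G)$.

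More precisely, given an input graph $G$, I would construct in polynomial time a hypergraph $H=(X,S)$ where $X=E(G)$ and $S=\{M(x)\mid x\in V(G)\}$. To compute each $M(x)$, I will run a BFS from $x$, obtain the distance levels $L_0(x),L_1(x),\ldots$, and then apply Lemma~\ref{lemma:M(x)}: for each vertex $u$ at level $i\geq 1$, if $u$ has a unique neighbour $v$ in $L_{i-1}(x)$, add the edge $uv$ to $M(x)$. This takes $O(|V(G)|+|E(G)|)$ time per vertex, hence $O(|V(G)|(|V(G)|+|E(G)|))$ overall, so the reduction is polynomial. By the definition of a distance-edge-monitoring set ($\bigcup_{x\in M}M(x)=E(G)$), a set $M\subseteq V(G)$ is distance-edge-monitoring for $G$ if and only if the corresponding subcollection $\{M(x)\mid x\in M\}$ is a set cover of $H$. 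Thus optimal solutions correspond bijectively, and in particular $\EM(G)$ equals the minimum set cover size of $H$.

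I then apply the greedy algorithm of~\cite{J74} to $H$, which runs in polynomial time and returns a set cover of size at most $\ln(|X|+1)$ times the optimum, where $|X|=|E(G)|$. Translating back, this yields a distance-edge-monitoring set of $G$ whose size is at most $\ln(|E(G)|+1)\cdot\EM(G)$, as required. No step poses a real obstacle here; the only point worth checking carefully is that $M(x)$ can indeed be computed efficiently, which is immediate from Lemma~\ref{lemma:M(x)} together with a single BFS from $x$.
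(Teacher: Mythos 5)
Your proposal is correct and is essentially the paper's own proof: both reduce \EMopt to \SCopt via the hypergraph with vertex set $E(G)$ and one hyperedge $M(x)$ per vertex $x$, then invoke the $\ln(|X|+1)$-approximation of~\cite{J74}. You additionally spell out how to compute each $M(x)$ via BFS and Lemma~\ref{lemma:M(x)}, a detail the paper leaves implicit.
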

\begin{proof}
  Let $G$ be a graph and consider the following hypergraph $H=(X,S)$ with $X=E(G)$ and such that $S$ contains, for each vertex $x$ of $G$, the set $S_x=\{e\in X~|~e\in \M(x)\}$. Now, it is not difficult to see that there is a one-to-one correspondence between set covers of $H$ and distance-edge-monitoring sets of $G$, where we associate to each vertex $x$ of a distance-edge-monitoring set of $G$, the set $S_x$ in a set cover of $H$. Thus, the result follows from the $\ln(|X|+1)$-approximation algorithm for \SCopt from~\cite{J74}.
\end{proof}

\subsection{Approximation and parameterized hardness}

\begin{theorem}\label{thm:approx-hardness}
  Even for graphs $G$ that are (a) of diameter~$4$, (b) bipartite and of diameter 6, or (c) bipartite and of maximum degree~$3$, \EMopt is not approximable within a factor of $(1-\epsilon)\ln |E(G)|$ in polynomial time, unless $P=NP$. Moreover, for such instances, \EMdec cannot be solved in time $|G|^{o(k)}$, unless FPT$=$W[1], and it is W[2]-hard for parameter $k$.
\end{theorem}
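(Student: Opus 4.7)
The plan is to give, for each of the three cases, a polynomial-time reduction from \SCopt to \EMopt in which the produced graph $G$ satisfies $\EM(G) = \mathrm{opt}(X,S) + O(1)$, where $\mathrm{opt}(X,S)$ is the optimum set-cover size of the input instance $(X,S)$ and $|E(G)|$ is polynomial in $|X|+|S|$. Combining such reductions with the $(1-\varepsilon)\ln |X|$ inapproximability of \SCopt from~\cite{DS14} and with the $|X|^{o(k)}$ lower bound and W[2]-hardness of \SCdec for parameter $k$ from~\cite{n^o(k)} then yields, in each case, the three conclusions of the theorem (the $(1-\varepsilon')\ln|E(G)|$ inapproximability, the $|G|^{o(k)}$ lower bound, and the W[2]-hardness parameterized by $k$).

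The core of each reduction is a gadget that, for every element $x \in X$, contributes a critical edge $e_x = u_x w_x$ to $G$ whose set of monitors is exactly $\{v_S : x \in S\}$ (plus a constant number of ``forced'' gadget vertices that are present in every solution). The gadget will be designed around Lemma~\ref{lemma:M(x)}: $u_x$ is adjacent to $v_S$ iff $x \in S$, while $w_x$ lies one BFS level beyond $u_x$ from such a $v_S$ and receives a few ``parallel'' edges to additional gadget vertices so that for every $y$ outside $\{v_S : x \in S\}$ and the forced vertices, $w_x$ has at least two neighbours at the BFS level of $u_x$ from $y$, killing the unique-neighbour condition. The remaining, non-critical edges of $G$ will then be monitored automatically by any set-cover-based solution, via Lemma~\ref{lemma:incident} and Lemma~\ref{lemma:bridges}.

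The three restrictions on $G$ will be realised by adjusting the ``hub'' structure that ties the element gadgets together. For case~(a), one or two hub vertices adjacent to all set-vertices and element-vertices bring the diameter down to~$4$ while preserving the gadget's monitoring behaviour. For case~(b), the hub is replaced by a short bipartite structure (for instance a subdivided star whose leaves are the $v_S$'s), so that $G$ becomes bipartite and of diameter at most~$6$. For case~(c), each high-degree $v_S$ is replaced by a balanced binary tree whose root represents selecting $S$ and whose leaves play the role of $v_S$ in the incidences with the $u_x$'s; this enforces bipartiteness and maximum degree~$3$ at the cost of a polynomial blow-up of $|G|$, absorbed by the polynomial relation between $|X|$ and $|E(G)|$.

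The main obstacle will be to verify, in each of the three cases, that the gadget really produces the claimed monitor set via a full BFS-level case analysis using Lemma~\ref{lemma:M(x)}, and that no non-critical edge of $G$ forces additional monitors beyond the set-cover-based solution and its forced constants. Once this is checked, the equality $\EM(G) = \mathrm{opt}(X,S) + O(1)$ follows directly, and translating the three assumptions on \SCopt and \SCdec through this approximation- and fpt-preserving reduction yields the full statement of the theorem in each class.
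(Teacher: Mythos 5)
Your plan follows essentially the same route as the paper: three polynomial-time, optimum-preserving (up to an additive constant) reductions from \SCopt, with hub vertices forced into every solution to control the diameter, a bipartite hub structure for case~(b), and balanced binary trees replacing high-degree vertices for case~(c), all analysed via Lemma~\ref{lemma:M(x)} and combined with the hardness results of~\cite{DS14} and~\cite{n^o(k)}. The only notable difference is in the element gadget: instead of adding extra parallel edges at $w_x$ to kill unwanted monitors, the paper introduces one intermediate vertex $c_{i,j}$ per incidence (so the critical edges are the $(c_{i,j},x_j)$) and preprocesses the instance so that every element lies in at least two sets, which automatically gives $x_j$ two neighbours on the relevant BFS level; the ``full BFS-level case analysis'' you defer is exactly where the paper's proof spends its effort, via explicit exchange arguments replacing any non-set vertex of a solution by a corresponding set vertex.
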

\begin{proof}
For an instance $(H,k)$ of \SCdec, we will construct in polynomial time instances $(G,k+2)$ or $(G,k+1)$ of \EMdec so that $H$ has a set cover of size $k$ if and only if $G$ has a distance-edge-monitoring set of size at most $k+2$ or $k+1$.

In our first reduction, the obtained instance has diameter~$4$, while in our second reduction, the obtained instance is bipartite and has diameter~$6$, and in our third reduction, the graph $G$ is bipartite and has maximum degree~$3$. The three constructions are similar.

The statement will follow from the hardness of approximating \SCopt proved in~\cite{DS14}, the parameterized hardness of \SCdec (parameterized by solution size), and the lower bound on its running time~\cite{n^o(k)}.

First of all, we point out that we may assume that in an instance $(H=(X,S),k)$ of \SCdec, there is no vertex of $X$ that belongs to a unique set of $S$. Indeed, otherwise, we are forced to take $S$ in any set cover of $H$; thus, by removing $S$ and all vertices in $S$, we obtain an equivalent instance $(H',k-1)$. We can iterate until the instance satisfies this property.

\paragraph{(1)}
We now describe the first reduction, in which the obtained instance has diameter~4.
Let $(H,k)=((X,S),k)$ be an instance of \SCdec, where $X= \{x_1,x_2,\ldots,x_{|X|}\}$,
$S = \{C_1,C_2,\ldots,\allowbreak C_{|S|}\}$ and $C_i=\{c_{i,j} \mid x_j \in C_i\}$.
Construct the following instance $(G,k+2)=((V,E),k+2)$ of \EMdec,
where $V=V_1 \cup V_2 \cup \cdots \cup V_5 \cup V'_1 \cup V'_2 \cup V'_3$, 
$E= E_1 \cup E_2 \cup E_3 \cup E_4 \cup E'_1 \cup E'_2 \cup E'_3 \cup E'_4$ and

\allowdisplaybreaks

  \begin{eqnarray*}
V_1 & = & \{u_1,u_2,u_3\}, \\
V_2 & = & \{v_i \mid 1 \leq i \leq |S|\}, \\
V_3 & = & S, \\
V_4 & = & \{ c_{i,j} \mid 1 \leq i \leq |S|, 1 \leq j \leq |X|, x_j \in C_i\}, \\
V_5 & = & X, \\[6pt]
V'_1 & = & \{u'_1,u'_2,u'_3\}, \\
V'_2 & = & \{ c'_{i,j} \mid 1 \leq i \leq |S|, 1 \leq j \leq |X|, x_j \in C_i\}, \\
V'_3 & = & \{w'_j \mid 1 \leq j \leq |X|\}, \\[6pt]
E_1 & = & \{(u_1,u_2), (u_1,u_3), (u_2,u'_1), (u_3,u'_1)\}, \\
E_2 & = & \{(u_1,v_i), (v_i,C_i) \mid 1 \leq i \leq |S|\}, \\
E_3 & = & \{(C_i, c_{i,j}) \mid 1 \leq i \leq |S|, 1 \leq j \leq |X|,x_j \in C_i\}, \\
E_4 & = & \{(c_{i,j},x_j) \mid 1 \leq i \leq |S|, 1 \leq j \leq |X|, x_j \in C_i\}, \\[6pt]
E'_1 & = & \{(u'_1,u'_2), (u'_2,u'_3), (u'_3,u'_1)\}, \\
E'_2 & = & \{(u'_1,c'_{i,j}), (c'_{i,j},c_{i,j}) \mid 1 \leq i \leq |S|, 1 \leq j \leq |X|, x_j \in C_i\}, \\
E'_3 & = & \{(u'_1,w'_j), (w'_j,x_j) \mid 1 \leq j \leq |X|\}, \\
E'_4 & = & \{(u'_1,v_i) \mid 1 \leq i \leq |S|\}.
\end{eqnarray*}

An example is given in Figure~\ref{fig:reduction-diam4}. Notice that the purpose of the vertices in $V'_1,V'_2,V'_3$ is only to reduce the diameter. A simpler reduction without these vertices, and with an additional edge joining $u_2$ and $u_3$ would also function.

\begin{figure}[!htpb]
\centering
\scalebox{0.8}{\begin{tikzpicture}[every loop/.style={},scale=1]
  \node[blackvertex,label={90:$v_1$}] (v1) at (2,2) {};
  \node[blackvertex,label={90:$v_2$}] (v2) at (2,0) {};
  \node[blackvertex,label={90:$v_3$}] (v3) at (2,-2) {};
  \node[blackvertex,label={100:$C_1$}] (C1) at (4,2) {};
  \node[blackvertex,label={100:$C_2$}] (C2) at (4,0) {};
  \node[blackvertex,label={100:$C_3$}] (C3) at (4,-2) {};
  \node[blackvertex,label={90:$c_{1,1}$}] (c11) at (6,3) {};
  \node[blackvertex,label={90:$c_{1,2}$}] (c12) at (6,2) {};
  \node[blackvertex,label={90:$c_{3,1}$}] (c31) at (6,1) {};
  \node[blackvertex,label={90:$c_{2,2}$}] (c22) at (6,0) {};
  \node[blackvertex,label={90:$c_{2,3}$}] (c23) at (6,-1) {};
  \node[blackvertex,label={90:$c_{2,4}$}] (c24) at (6,-2) {};
  \node[blackvertex,label={90:$c_{3,3}$}] (c33) at (6,-3) {};
  \node[blackvertex,label={90:$c_{3,4}$}] (c34) at (6,-4) {};
    \node[blackvertex,label={0:$x_1$}] (x1) at (8,2) {};
  \node[blackvertex,label={0:$x_2$}] (x2) at (8,0) {};
  \node[blackvertex,label={0:$x_3$}] (x3) at (8,-2) {};
  \node[blackvertex,label={0:$x_4$}] (x4) at (8,-4) {};
   \node[blackvertex,label={180:$u_1$}] (u1) at (0,0) {}; 
   \node[blackvertex,label={180:$u_2$}] (u2) at (-1,-4) {}; 
   \node[blackvertex,label={0:$u_3$}] (u3) at (1,-4) {}; 
   \node[blackvertex,label={270:$u'_1$}] (u'1) at (4,-7) {}; 
   \node[blackvertex,label={180:$u'_2$}] (u'2) at (3,-8) {}; 
   \node[blackvertex,label={0:$u'_3$}] (u'3) at (5,-8) {}; 

   \node[blackvertex,label={}] (v'2) at (6,-5.5) {}; 
   \node[blackvertex,label={}] (v'3) at (8,-5.5) {}; 
   \path (7,-5.5) node {\ldots};

  \draw[thick] (u1) -- (v1) -- (C1);
  \draw[thick] (u1) -- (v2) -- (C2);
  \draw[thick] (u1) -- (v3) -- (C3);
  \draw[thick] (u3) -- (u1) -- (u2);
  \draw[thick] (u3) -- (u'1) -- (u2);
  \draw[thick] (u'3) -- (u'1) -- (u'2) -- (u'3);
  \draw[thick] (C1) -- (c11) -- (x1);
  \draw[thick] (C1) -- (c12) -- (x2);
  \draw[thick] (C2) -- (c22) -- (x2);
  \draw[thick] (C2) -- (c23) -- (x3);
  \draw[thick] (C2) -- (c24) -- (x4);
  \draw[thick] (C3) -- (c33) -- (x3);
  \draw[thick] (C3) -- (c34) -- (x4);
  \draw[thick] (C3) -- (c31) -- (x1);
  \draw[thick] (u'1) -- (v1);
  \draw[thick] (u'1) -- (v2);
  \draw[thick] (u'1) -- (v3);
  \draw[thick] (u'1) -- (v'2) -- ++(0,0.5);
  \draw[thick] (u'1) -- (v'3) -- ++(0,0.5);
  
  
\end{tikzpicture}}
\caption{First reduction from \SCdec to \EMdec from the proof of Theorem~\ref{thm:approx-hardness} applied to the hypergraph $(\{x_1,x_2,x_3,x_4\},\{C_1=\{x_1,x_2\},C_2=\{x_2,x_3,x_4\},C_3=\{x_1,x_3,x_4\}\})$. Vertices and edges of $V'_i$ and $E'_i$ for $i=2,3$ are only suggested.}\label{fig:reduction-diam4}
\end{figure}
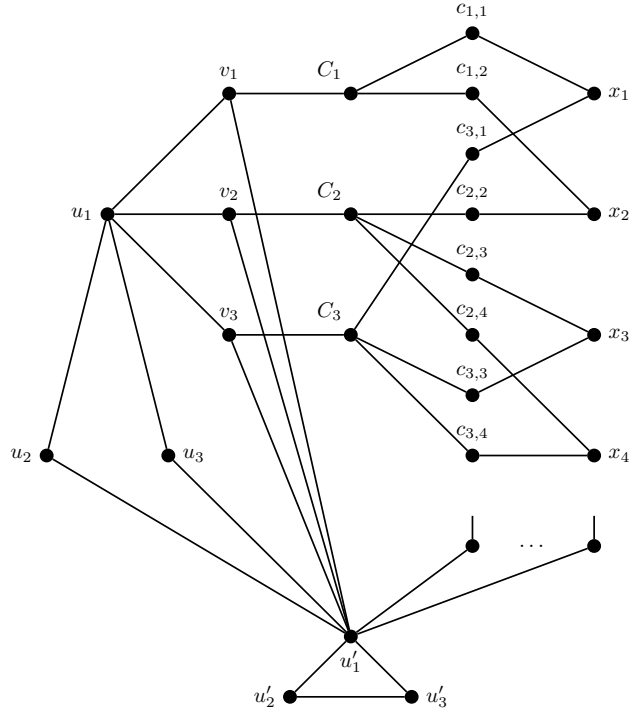

To see that $G$ has diameter~$4$, observe that every vertex of $G$ has a path of length at most~$2$ to the vertex $u_1'$.

Let $C$ be a set cover of $H$ of size $k$. Define $M = C \cup \{u_1,u'_2\}$. Then, by Lemma~\ref{lemma:M(x)}, $u_1$ monitors (in particular) the edges $(u_1,u_2)$, $(u_1,u_3)$, $(u'_1,u'_3)$, and all the edges in $E_2 \cup E_3$. Similarly, $u'_2$ monitors the edges $(u'_1,u'_2)$, $(u'_2,u'_3)$, $(u_3,u'_1)$, $(u_2,u'_1)$ and all the edges in $E'_2 \cup E'_3 \cup E'_4$. It thus remains to show that all edges of $E_4$ are monitored. Notice that among those edges, vertex $C_i$ of $S$ monitors exactly all edges $c_{j_1,j_2}$ with $x_{j_2}\in C_i$. Thus, if $e=(c_{i,j},x_j)$, there is $i'$ such that $x_j \in C_{i'}$ and $C_{i'} \in C$, and $e$ is monitored by $C_{i'}$ (either $i=i'$ and the only shortest path from $x_j$ to $C_i$ contains $e$, or $i\neq i'$ and the only shortest path from $c_{i,j}$ to $C_{i'}$ contains $e$). Hence, $M$ is a distance-edge-monitoring set of $G$ of size at most $k+2$.

Conversely, let $M$ be a distance-edge-monitoring set of $G$ of size at most $k+2$. 
In order to monitor the edge $(u'_2,u'_3)$, either $u'_2\in M$ or $u'_3 \in M$. 
In order to monitor the edges $(u_1,u_3)$ and $(u_1,u_2)$, there must be a vertex of $M$ in $\{u_1,u_2,u_3\}$. We may replace $M\cap\{u_1,u_2,u_3\}$ by $u_1$ and $M\cap\{u'_2,u'_3\}$ by $u'_2$, as $u_1$ monitors the same edges as $\{u_1,u_2,u_3\}$ (among those not already monitored by $u'_2$) and $u'_2$ monitors the same edges as $\{u'_2,u'_3\}$ (among those not already monitored by $u_1$). As seen in the previous paragraph, all edges of $E_1$, $E_2$, $E_3,$ $E'_1$, $E'_2$, $E'_3$ and $E'_4$ are monitored by $\{u_1,u'_2\}$. However, no edge of $E_4$ is monitored by any vertex of $V_1\cup V'_1$. Thus, all remaining vertices of $M$ are needed precisely to monitor the edges of $E_4$.

If $v_i \in M$, let $M= M \setminus\{v_i\} \cup \{C_i\}$, and the set of monitored edges does not decrease.
If $c_{i,j} \in M$, let $M= M \setminus\{c_{i,j}\} \cup \{C_i\}$, and the set of monitored edges does not decrease.
If $x_j \in M$, let $M= M \setminus\{x_j\} \cup \{c_{i,j}\}$ for some $i$ such that $x_j \in C_i$, and the set of monitored edges does not decrease.
If $c'_{i,j} \in M$, let $M= M \setminus\{c'_{i,j}\} \cup \{c_{i,j}\}$, and the set of monitored edges does not decrease.
If $w'_j \in M$, let $M= M \setminus\{w'_j\} \cup \{x_j\}$, and the set of monitored edges does not decrease.
Iterating this process, we finally obtain a distance-edge-monitoring set $M'$ of $G$ with $|M' \cap V_1|=1$,
$|M' \cap V'_1| \geq 1$, $|M' \cap V_3| \leq k$, $M' \cap (V_2 \cup V_4 \cup V_5 \cup V'_2 \cup V'_3) = \emptyset$.
Let $C= M' \cap V_3$. If $C$ is not a set cover of $H$, then there is $x_j \in X$ that is not covered by $C$. Recall that we assumed that each vertex of $X$
belongs to at least two edges of $S$. Then, according to Lemma~\ref{lemma:M(x)}, any edge $(c_{i,j},x_j)$ is not monitored by any of the vertices in $M' = (M' \cap V_1) \cup (M' \cap V'_1) \cup C$, a contradiction.
Hence, $C$ is a set cover of $H$ of size at most $k$.

\paragraph{(2)}
We now describe the second reduction, in which the obtained instance is bipartite and has diameter 6.
Let $(H,k)=((X,S),k)$ be an instance of \SCdec, where $X= \{x_1,x_2,\ldots,x_{|X|}\}$,
$S = \{C_1,C_2,\ldots,C_{|S|}\}$ and $C_i=\{c_{i,j} \mid x_j \in C_i\}$.
Construct the following instance $(G,k+2)=((V,E),k+2)$ of \EMdec,
where $V=V_1 \cup V_2 \cup \cdots \cup V_5 \cup V'_1 \cup V'_2 \cup V'_3 \cup V''$,
$E= E_1 \cup E_2 \cup E_3 \cup E_4 \cup E'_1 \cup E'_2 \cup E'_3 \cup E''$ and
\begin{eqnarray*}
V_1 & = & \{u_1,u_2,u_3,u_4\}, \\
V_2 & = & \{v_i \mid 1 \leq i \leq |S|\}, \\
V_3 & = & S, \\
V_4 & = & \{ c_{i,j} \mid 1 \leq i \leq |S|, 1 \leq j \leq |X|, x_j \in C_i\}, \\
V_5 & = & X, \\[6pt]
V'_1 & = & \{u'_1,u'_2,u'_3,u'_4\}, \\
V'_2 & = & \{v'_i \mid 1 \leq i \leq |S|\}, \\
V'_3 & = & \{w'_j \mid 1 \leq j \leq |X|\}, \\[6pt]
V''\! & = & \{u''_1\}, \\[6pt]
E_1 & = & \{(u_1,u_2), (u_2,u_3), (u_3,u_4), (u_4,u_1)\}, \\
E_2 & = & \{(u_1,v_i), (v_i,C_i) \mid 1 \leq i \leq |S|\}, \\
E_3 & = & \{(C_i, c_{i,j}) \mid 1 \leq i \leq |S|, 1 \leq j \leq |X|, x_j \in C_i\}, \\
E_4 & = & \{(c_{i,j},x_j) \mid 1 \leq i \leq |S|, 1 \leq j \leq |X|, x_j \in C_i\}, \\[6pt]
E'_1 & = & \{(u'_1,u'_2), (u'_2,u'_3), (u'_3,u'_4), (u'_4,u'_1)\}, \\
E'_2 & = & \{(u'_1,v'_i), (v'_i,C_i) \mid 1 \leq i \leq |S|\}, \\
E'_3 & = & \{(u'_1,w'_j), (w'_j,x_j) \mid 1 \leq j \leq |X|\}, \\[6pt]
E'' & = & \{(u_1,u''_1), (u''_1,u'_1)\}.
\end{eqnarray*}

The reduction is depicted in Figure~\ref{fig:reduction-bip}.

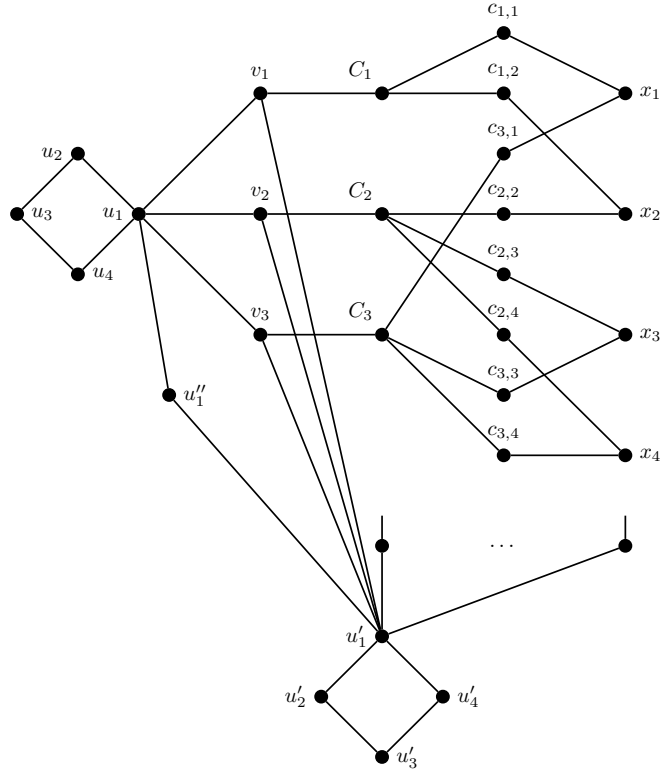
\begin{figure}[t]
\centering
\scalebox{0.8}{\begin{tikzpicture}[every loop/.style={},scale=1]
  \node[blackvertex,label={90:$v_1$}] (v1) at (2,2) {};
  \node[blackvertex,label={90:$v_2$}] (v2) at (2,0) {};
  \node[blackvertex,label={90:$v_3$}] (v3) at (2,-2) {};
  \node[blackvertex,label={100:$C_1$}] (C1) at (4,2) {};
  \node[blackvertex,label={100:$C_2$}] (C2) at (4,0) {};
  \node[blackvertex,label={100:$C_3$}] (C3) at (4,-2) {};
  \node[blackvertex,label={90:$c_{1,1}$}] (c11) at (6,3) {};
  \node[blackvertex,label={90:$c_{1,2}$}] (c12) at (6,2) {};
  \node[blackvertex,label={90:$c_{3,1}$}] (c31) at (6,1) {};
  \node[blackvertex,label={90:$c_{2,2}$}] (c22) at (6,0) {};
  \node[blackvertex,label={90:$c_{2,3}$}] (c23) at (6,-1) {};
  \node[blackvertex,label={90:$c_{2,4}$}] (c24) at (6,-2) {};
  \node[blackvertex,label={90:$c_{3,3}$}] (c33) at (6,-3) {};
  \node[blackvertex,label={90:$c_{3,4}$}] (c34) at (6,-4) {};
    \node[blackvertex,label={0:$x_1$}] (x1) at (8,2) {};
  \node[blackvertex,label={0:$x_2$}] (x2) at (8,0) {};
  \node[blackvertex,label={0:$x_3$}] (x3) at (8,-2) {};
  \node[blackvertex,label={0:$x_4$}] (x4) at (8,-4) {};
   \node[blackvertex,label={180:$u_1$}] (u1) at (0,0) {}; 
   \node[blackvertex,label={180:$u_2$}] (u2) at (-1,1) {}; 
   \node[blackvertex,label={0:$u_3$}] (u3) at (-2,0) {}; 
  \node[blackvertex,label={0:$u_4$}] (u4) at (-1,-1) {};
    \node[blackvertex,label={0:$u''_1$}] (u''1) at (0.5,-3) {};
   \node[blackvertex,label={180:$u'_1$}] (u'1) at (4,-7) {}; 
   \node[blackvertex,label={180:$u'_2$}] (u'2) at (3,-8) {}; 
   \node[blackvertex,label={0:$u'_3$}] (u'3) at (4,-9) {}; 
      \node[blackvertex,label={0:$u'_4$}] (u'4) at (5,-8) {}; 

   \node[blackvertex,label={}] (v'1) at (4,-5.5) {}; 
   \node[blackvertex,label={}] (v'3) at (8,-5.5) {}; 
   \path (6,-5.5) node {\ldots};

  \draw[thick] (u1) -- (v1) -- (C1);
  \draw[thick] (u1) -- (v2) -- (C2);
  \draw[thick] (u1) -- (v3) -- (C3);
  \draw[thick] (u4) -- (u1) -- (u2) -- (u3) -- (u4);
  \draw[thick] (u1) -- (u''1) -- (u'1);
  \draw[thick] (u'4) -- (u'1) -- (u'2) -- (u'3) -- (u'4);
  \draw[thick] (C1) -- (c11) -- (x1);
  \draw[thick] (C1) -- (c12) -- (x2);
  \draw[thick] (C2) -- (c22) -- (x2);
  \draw[thick] (C2) -- (c23) -- (x3);
  \draw[thick] (C2) -- (c24) -- (x4);
  \draw[thick] (C3) -- (c33) -- (x3);
  \draw[thick] (C3) -- (c34) -- (x4);
  \draw[thick] (C3) -- (c31) -- (x1);
  \draw[thick] (u'1) -- (v1);
  \draw[thick] (u'1) -- (v2);
  \draw[thick] (u'1) -- (v3);
  \draw[thick] (u'1) -- (v'1) -- ++(0,0.5);
  \draw[thick] (u'1) -- (v'3) -- ++(0,0.5);
  
  
\end{tikzpicture}}
\caption{Second reduction from \SCdec to \EMdec from the proof of Theorem~\ref{thm:approx-hardness} applied to the hypergraph $(\{x_1,x_2,x_3,x_4\},\{C_1=\{x_1,x_2\},C_2=\{x_2,x_3,x_4\},C_3=\{x_1,x_3,x_4\}\})$. Vertices and edges of $V'_i$ and $E'_i$ for $i=2,3$ are only suggested.}\label{fig:reduction-bip}
\end{figure}

Let $C$ be a set cover of $H$ of size $k$. Define $M = C \cup \{u_3,u'_3\}$. Then, by Lemma~\ref{lemma:M(x)},
$u_3$ monitors the edges $(u_2,u_3)$, $(u_3,u_4)$, $(u_1,u''_1)$, $(u'_2,u'_1)$, $(u'_4,u'_1)$  and all the edges in $E_2 \cup E_3$,
and $u'_3$ monitors the edges $(u'_2,u'_3)$, $(u'_3,u'_4)$, $(u''_1,u'_1)$, $(u_2,u_1)$, $(u_4,u_1)$ and all the edges
in $E'_2 \cup E'_3 \cup E'_4$.
If $e=(c_{i,j},x_j)$, there is $i'$ such that $x_j \in C_{i'}$ and $C_{i'} \in C$, and $e$ is monitored by $C_{i'}$
(either $i=i'$ and the only shortest path from $x_j$ to $C_i$ contains $e$, or $i\neq i'$ and the only shortest path from
$c_{i,j}$ to $C_{i'}$ contains $e$). Hence, $M$ is a distance-edge-monitoring set of $G$ of size at most $k+2$.

Let $M$ be a distance-edge-monitoring set of $G$ of size at most $k+2$.
In order to monitor the edge $(u_2,u_3)$, either $u_2\in M$ or $u_3 \in M$.
If $v_i \in M$, let $M= M \setminus\{v_i\} \cup \{C_i\}$, and the set of monitored edges does not decrease.
If $c_{i,j} \in M$, let $M= M \setminus\{c_{i,j}\} \cup \{C_i\}$, and the set of monitored edges does not decrease.
If $x_j \in M$, let $M= M \setminus\{x_j\} \cup \{c_{i,j}\}$ for some $i$ such that $x_j \in C_i$, and the set of monitored edges does not decrease.
In order to monitor the edge $(u'_2,u'_3)$, either $u'_2\in M$ or $u'_3 \in M$.
If $v'_i \in M$, let $M= M \setminus\{v'_i\} \cup \{C_i\}$, and the set of monitored edges does not decrease.
If $w'_j \in M$, let $M= M \setminus\{w'_j\} \cup \{x_j\}$, and the set of monitored edges does not decrease.
If $u''_1 \in M$, let $M= M \setminus\{u''_1\} \cup \{u_3\}$, and the set of monitored edges does not decrease.
Iterating this process, we finally obtain a distance-edge-monitoring set $M'$ of $G$ with $|M' \cap V_1| \geq 1$,
$|M' \cap V'_1| \geq 1$, $|M' \cap V_3| \leq k$, $M' \cap (V_2 \cup V_4 \cup V_5 \cup V'_2 \cup V'_3 \cup V'') = \emptyset$.
Let $C= M' \cap V_3$. If $C$ is not a set cover of $H$, then there is $x_j \in X$ that is not covered by $C$. Without loss of generality, assume that each vertex of $X$
belongs to at least two edges of $C$. Then, according to Lemma~\ref{lemma:M(x)}, any edge $(c_{i,j},x_j)$ is
not monitored by any of the vertices in $M' = (M' \cap V_1) \cup (M' \cap V'_1) \cup C$, a contradiction.
Hence, $C$ is a set cover of $H$ of size at most $k$.

\paragraph{(3)}
We now describe the third reduction, in which the obtained instance is bipartite and has maximum degree 3.
Let $(H,k)=((X,S),k)$ be an instance of \SCdec, where $X= \{x_1,x_2,\ldots,x_{|X|}\}$,
$S = \{C_1,C_2,\ldots,C_{|S|}\}$ and $C_i=\{c_{i,j} \mid x_j \in C_i\}$.
Let $D = \lceil\log_2 \max\{|X|,|S|\}\rceil$.
Let $B(r,\{r_1,r_2,\ldots,\allowbreak r_{\ell}\},d)$ denote a binary tree with root $r$ and $\ell$ leaves $r_1,r_2,\ldots,r_{\ell}$ at distance 
$d\geq \lceil\log_2 \ell\rceil$ from $r$.
Let $P(r,s,\ell)$ denote the path of length $\ell$ with endpoints $r$ and $s$.
Let $\mathcal C_{i} = \{ c_{i,j} \mid 1 \leq i \leq |S|, 1 \leq j \leq |X|, x_j \in C_i\}$.
Let $\mathcal X_j = \{ c_{i,j} \mid 1 \leq i \leq |S|, 1 \leq j \leq |X|, x_j \in C_i\}$.
Let $B = B(u,\{v_1,v_2,\ldots,v_{|S|}\},D)$.
For $1 \leq i \leq |S|, 1 \leq j \leq |X|$, let $P_i = P(v_i,C_i,D)$,
$B(C_i) = B(C_i,\mathcal C_{i},D)$, and
$B(x_j) = B(x_j,\mathcal X_j,D)$.
Construct the following instance $(G,k+1)=((V,E),k+1)$ of \EMdec,
where $V=V_1 \cup V_2 \cup \cdots \cup V_8$,
$E= E_1 \cup E_2 \cup \cdots \cup E_5$ and
\begin{eqnarray*}
V_1 & = & \{u_1,u_2,u_3,u_4\}, \\
V_2 & = & V(B), \\
V_3 & = & \cup_{1 \leq i \leq |S|} \, V(P_i), \\
V_4 & = & S, \\
V_5 & = & \cup_{1 \leq i \leq |S|, 1 \leq j \leq |X|, x_j \in C_i} \, V(B(C_i)), \\
V_6 & = & \cup_{1 \leq i \leq |S|, 1 \leq j \leq |X|, x_j \in C_i} \, \mathcal C_{i}, \\
V_7 & = & \cup_{1 \leq j \leq |X|} \, V(B(x_j)), \\
V_8 & = & X, \\[6pt]
E_1 & = & \{(u_1,u_2), (u_2,u_3), (u_3,u_4), (u_4,u_1), (u_1,u)\}, \\
E_2 & = & E(B), \\
E_3 & = & \cup_{1 \leq i \leq |S|} \, E(P_i), \\
E_4 & = & \cup_{1 \leq i \leq |S|, 1 \leq j \leq |X|, x_j \in C_i} \, E(B(C_i)), \\
E_5 & = & \cup_{1 \leq j \leq |X|} \, E(B(x_j)). 
\end{eqnarray*}

The reduction is depicted in Figure~\ref{fig:reduction-subcubic}.

\begin{figure}[!htpb]
\centering
\scalebox{0.8}{\begin{tikzpicture}[every loop/.style={},scale=1]
  \node[blackvertex,label={180:$u_1$}] (u1) at (-1,0) {};
  \node[blackvertex,label={90:$u$}] (u) at (0,0) {};
  \node[blackvertex] (b1) at (1,1) {};
  \node[blackvertex] (b2) at (1,-1) {};
  \node[blackvertex,label={90:$v_1$}] (v1) at (2,2) {};
  \node[blackvertex,label={90:$v_2$}] (v2) at (2,0) {};
  \node[blackvertex,label={90:$v_3$}] (v3) at (2,-2) {};
  \node[blackvertex] (p1) at (3,2) {};
  \node[blackvertex] (p2) at (3,0) {};
  \node[blackvertex] (p3) at (3,-2) {};
  \node[blackvertex,label={100:$C_1$}] (C1) at (4,2) {};
  \node[blackvertex,label={100:$C_2$}] (C2) at (4,0) {};
  \node[blackvertex,label={100:$C_3$}] (C3) at (4,-2) {};
  \node[blackvertex] (b11) at (5.5,3) {};
  \node[blackvertex] (b12) at (5.5,2) {};
  \node[blackvertex] (b21) at (5.5,0) {};
  \node[blackvertex] (b22) at (5.5,-1) {};
  \node[blackvertex] (b31) at (5.5,-2) {};
  \node[blackvertex] (b32) at (5.5,-3) {};
  \node[blackvertex,label={90:$c_{1,1}$}] (c11) at (7,3) {};
  \node[blackvertex,label={90:$c_{1,2}$}] (c12) at (7,2) {};
  \node[blackvertex,label={90:$c_{3,1}$}] (c31) at (7,1) {};
  \node[blackvertex,label={90:$c_{2,2}$}] (c22) at (7,0) {};
  \node[blackvertex,label={90:$c_{2,3}$}] (c23) at (7,-1) {};
  \node[blackvertex,label={90:$c_{2,4}$}] (c24) at (7,-2) {};
  \node[blackvertex,label={90:$c_{3,3}$}] (c33) at (7,-3) {};
  \node[blackvertex,label={90:$c_{3,4}$}] (c34) at (7,-4) {};
  \node[blackvertex] (bx11) at (8.5,3) {};
  \node[blackvertex] (bx21) at (8.5,2) {};
  \node[blackvertex] (bx12) at (8.5,1) {};
  \node[blackvertex] (bx22) at (8.5,0) {};
  \node[blackvertex] (bx31) at (8.5,-1) {};
  \node[blackvertex] (bx41) at (8.5,-2) {};
  \node[blackvertex] (bx32) at (8.5,-3) {};
  \node[blackvertex] (bx42) at (8.5,-4) {};
    \node[blackvertex,label={0:$x_1$}] (x1) at (10,2) {};
  \node[blackvertex,label={0:$x_2$}] (x2) at (10,0) {};
  \node[blackvertex,label={0:$x_3$}] (x3) at (10,-2) {};
  \node[blackvertex,label={0:$x_4$}] (x4) at (10,-4) {};
   \node[blackvertex,label={180:$u_2$}] (u2) at (-2,1) {}; 
   \node[blackvertex,label={0:$u_3$}] (u3) at (-3,0) {}; 
  \node[blackvertex,label={0:$u_4$}] (u4) at (-2,-1) {};

  \draw[thick] (u1) -- (u) (b1) -- (v1) -- (p1) -- (C1);
  \draw[thick] (u) -- (b1) -- (v2) -- (p2) -- (C2);
  \draw[thick] (u) -- (b2) -- (v3) -- (p3) -- (C3);
  \draw[thick] (u4) -- (u1) -- (u2) -- (u3) -- (u4) (u1) -- (u);
  \draw[thick] (C1) -- (b11) -- (c11) -- (bx11) -- (x1);
  \draw[thick] (C1) -- (b12) -- (c12) -- (bx12) -- (x2);
  \draw[thick] (C2) -- (b21) -- (c22) -- (bx22) -- (x2);
  \draw[thick] (b21) -- (c23) -- (bx31) -- (x3);
  \draw[thick] (C2) -- (b22) -- (c24) -- (bx41) -- (x4);
  \draw[thick] (C3) -- (b32) -- (c33) -- (bx32) -- (x3);
  \draw[thick] (b32) -- (c34) -- (bx42) -- (x4);
  \draw[thick] (C3) -- (b31) -- (c31) -- (bx21)-- (x1);

  
\end{tikzpicture}}
\caption{Third reduction from \SCdec to \EMdec from the proof of Theorem~\ref{thm:approx-hardness} applied to the hypergraph $(\{x_1,x_2,x_3,x_4\},\{C_1=\{x_1,x_2\},C_2=\{x_2,x_3,x_4\},C_3=\{x_1,x_3,x_4\}\})$, thus $D=\lceil\log_2 \max\{3,4\}\rceil=2$.}\label{fig:reduction-subcubic}
\end{figure}
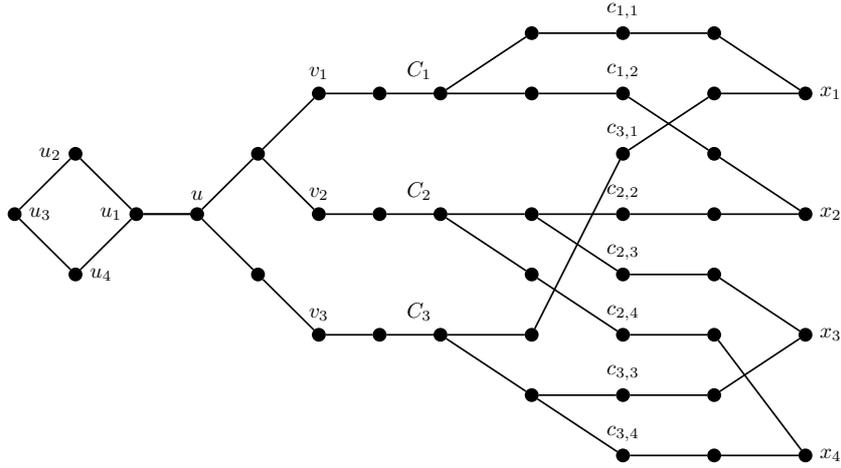

Let $C$ be a set cover of $H$ of size $k$. Define $M = C \cup \{u_2\}$. Then, by Lemma~\ref{lemma:M(x)},
$u_2$ monitors the edges $(u_1,u_2)$, $(u_2,u_3)$, $(u_1,u)$  and all the edges in $E_2 \cup E_3 \cup E_4$,
and any vertex in $C$ monitors the edges $(u_1,u_4)$, $(u_1,u_2)$.
If $e \in E(B(x_j))$, there is $i'$ such that $x_j \in C_{i'}$ and $C_{i'} \in C$, and $e$ is monitored by $C_{i'}$
(either $i=i'$ and the only shortest path from $x_j$ to $C_i$ contains $e$, or $i\neq i'$ and the only shortest path from
$c_{i,j}$ to $C_{i'}$ contains $e$). Hence, $M$ is a distance-edge-monitoring set of $G$ of size at most $k+1$.

Let $M$ be a distance-edge-monitoring set of $G$ of size at most $k+1$.
In order to monitor the edge $(u_2,u_3)$, either $u_2\in M$ or $u_3 \in M$.
If $v \in (M \cap V(B))$, descend to the leftmost leaf $v_i$ in $B$ with $v$ as its ancestor, and let $M= M \setminus\{v_i\} \cup \{C_i\}$.
Then, the set of monitored edges does not decrease.
If $v \in (M \cap V(P_i))$, let $M= M \setminus\{v_i\} \cup \{C_i\}$, and the set of monitored edges does not decrease.
If $v \in (M \cap V(B(C_i)))$, let $M= M \setminus\{v\} \cup \{C_i\}$, and the set of monitored edges does not decrease.
If $v \in (M \cap V(B(x_j)))$, let $M= M \setminus\{x_j\} \cup \{c_{i,j}\}$ for some $i$ such that $x_j \in C_i$,
and the set of monitored edges does not decrease.
Iterating this process, we finally obtain a distance-edge-monitoring set $M'$ of $G$ with
$|M' \cap V_1| \geq 1$, $|M' \cap V_4| \leq k$, $M' \cap (V_2 \cup V_3 \cup V_5 \cup V_6 \cup V_7 \cup V_8) = \emptyset$.
Let $C= M' \cap V_4$. If $C$ is not a set cover of $H$, then there is $x_j \in X$ that is not covered by $C$. Without loss of generality, assume that each vertex of $X$
belongs to at least two edges of $C$. Then, according to Lemma~\ref{lemma:M(x)}, any edge $(v,x_j)$ is
not monitored by any of the vertices in $M' = (M' \cap V_1) \cup C$, a contradiction.
Hence, $C$ is a set cover of $H$ of size at most $k$.
%
\end{proof}

\section{Concluding remarks and questions}\label{sec:conclu}

We have introduced a new graph parameter useful in the area of network monitoring. We conclude the paper with some remarks for future research directions.

\subsection{Structural graph parameters}

We have related the parameter $\EM$ to other standard graph parameters by giving lower and upper bounds. The diagram in Figure~\ref{fig:diagram} shows the relations between these parameters. It would be interesting to improve them. In particular, is it true that $\EM(G)\leq \FES(G)+1$? As we have seen, this bound would be tight. Moreover, is the bound $\EM(G)\geq\omega(G)/2$ from Theorem~\ref{thm:arboricity} tight?


The relation of the parameter $\EM$ to other structural graph parameters not considered here would also be of interest.
A \emph{feedback vertex set} of a graph $G$ is a set of vertices such that removing them from $G$ leaves a forest. The smallest size of a feedback vertex set of $G$ is denoted by $\FVS(G)$. It is not difficult to show that for any graph $G$, $\FVS(G)\leq\VC(G)$ and $\FVS(G)\leq\FES(G)$. However, we cannot hope to bound $\EM(G)$ by a function of $\FVS(G)$, as we did in Theorems~\ref{thm:VC} and~\ref{thm:fes} for $\VC(G)$ and $\FES(G)$. Indeed, consider the example of a path $P_a$ of order $a\geq 9$ to which we add a universal vertex. This graph $P_a\bowtie K_1$ has radius at least~$4$, thus, by Theorem~\ref{thm:VC-universal-vertex}, we have $\EM(P_a\bowtie K_1)=\VC(P_a)=\lfloor a/2\rfloor$, while we have $\FVS(P_a\bowtie K_1)=1$. Similarly, the pathwidth of $P_a\bowtie K_1$ is $2$. Nevertheless, we do not know whether we can upper-bound $\EM(G)$ by a function of the bandwidth or the treedepth of $G$.

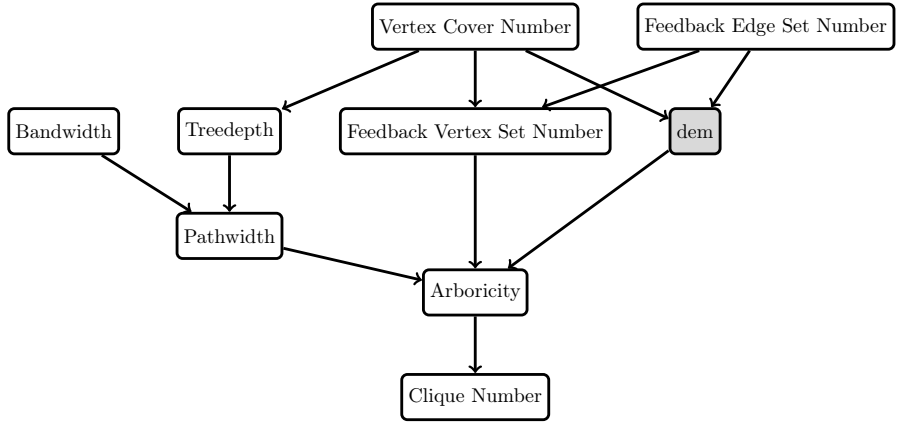
\begin{figure}[ht!]
\centering
\tikzstyle{mybox}=[line width=0.5mm,rectangle, minimum height=.8cm,fill=white!70,rounded corners=1mm,draw]
\tikzstyle{myedge}=[line width=0.5mm,<-]
\scalebox{0.75}{\begin{tikzpicture}[node distance=10mm]
 		\node[mybox] (vc)  {Vertex Cover Number};
  		\node[mybox] (fes) [right = of vc] {Feedback Edge Set Number};
 		\node[mybox] (fvs) [below = of vc] {Feedback Vertex Set Number} edge[myedge] (vc) edge[myedge] (fes);
 		\node[mybox] (td) [left = of fvs] {Treedepth} edge[myedge] (vc);
 		\node[mybox] (bw) [left = of td] {Bandwidth};
 		\node[mybox] (pw) [below = of td] {Pathwidth} edge[myedge] (td) edge[myedge] (bw);
 		\node[mybox] (dem) [right = of fvs,fill=black!15] {$\EM$} edge[myedge] (vc) edge[myedge] (fes);
 		\node[mybox] (arb) [below = of fvs,yshift=-10mm] {Arboricity} edge[myedge] (fvs) edge[myedge] (pw) edge[myedge] (dem);
 		\node[mybox] (c) [below = of arb] {Clique Number} edge[myedge] (arb);

\end{tikzpicture}}
\caption{Relations between some structural parameters and $\EM$. Arcs between parameters indicate that the bottom parameter is upper-bounded by a function of the top parameter.}
\label{fig:diagram}
\end{figure}

\subsection{Further complexity questions}

It would also be interesting to determine graph classes where \EMdec has a polynomial-time (or parameterized) exact or constant-factor approximation algorithm. For example, what happens for planar graphs, chordal graphs or interval graphs?

\subsection{Strenghtening the definition}

It is also reasonable to require to be able to simultaneously detect multiple edge failures in a network. In that case, we could propose the following strengthening of distance-edge-monitoring sets, based on extending Proposition~\ref{prop:location}. For a graph $G$ and a set $E$ of edges of $G$, we denote by $G-E$ the graph obtained by removing all edges of $E$ from $G$.

\begin{definition}
  For a set $E$ of edges and a set $M$ of vertices of a graph $G$, we let $P(M,E)$ be the set of pairs $(x,y)$ with $x$ a vertex of $M$ and $y$ a vertex of $V(G)$ such that $d_G(x,y)\neq d_{G-E}(x,y)$.

  For a graph $G$, a distance-edge-monitoring set $M$ of $G$ is called a \emph{strong distance-edge-monitoring set} of $G$ if for any two distinct subsets $E_1$ and $E_2$ of edges of $G$, we have $P(M,E_1)\neq P(M,E_2)$. We denote by $\SEM(G)$ the smallest size of a strong distance-edge-monitoring set of $G$.
\end{definition}

However, it turns out that the concept of strong distance edge-monitoring sets is in fact equivalent to the one of a vertex cover!

\begin{proposition}
A set $M$ of vertices of a graph $G$ is strong distance-edge-monitoring if and only if it is a vertex cover of $G$.
\end{proposition}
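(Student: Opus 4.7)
The plan is to prove the equivalence by establishing the two implications separately. The forward direction (vertex cover $\Rightarrow$ strong distance-edge-monitoring) will combine Theorem~\ref{thm:VC} with a very simple distinguishing argument, while the reverse direction (strong distance-edge-monitoring $\Rightarrow$ vertex cover) will be handled by contrapositive, exhibiting a single explicit pair $E_1,E_2$ that $M$ cannot distinguish whenever some edge has both endpoints outside $M$.

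For the forward direction, suppose $M$ is a vertex cover. By Theorem~\ref{thm:VC}, $M$ is already distance-edge-monitoring, so the only thing to check is the distinguishing condition. Given two distinct edge subsets $E_1\ne E_2$, pick an edge $e=uv$ in their symmetric difference, say $e\in E_1\setminus E_2$; since $M$ is a vertex cover, at least one endpoint, say $u$, lies in $M$. The pair $(u,v)$ then distinguishes: in $G-E_1$ the edge $e$ is removed, so $d_{G-E_1}(u,v)\geq 2$, whereas in $G-E_2$ the edge $e$ is still present, so $d_{G-E_2}(u,v)=1=d_G(u,v)$. Hence $(u,v)\in P(M,E_1)\setminus P(M,E_2)$.

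For the reverse direction, I would argue by contrapositive. Assume $M$ is not a vertex cover, so there exists an edge $e=uv$ with $u,v\notin M$. The key trick is to pick $E_1=E(G)$ and $E_2=E(G)\setminus\{e\}$: these are clearly distinct, and from the perspective of every $x\in M$ they look identical. Indeed, for any $x\in M$ we have $x\notin\{u,v\}$, so $x$ is isolated in $G-E_1$ (no edges remain) and also isolated in $G-E_2$ (the only remaining edge $e=uv$ is not incident to $x$). Therefore $d_{G-E_1}(x,y)=d_{G-E_2}(x,y)=\infty$ for all $y\neq x$, and since $G$ is connected $d_G(x,y)$ is finite, giving
\[P(M,E_1)=P(M,E_2)=\{(x,y):x\in M,\ y\in V(G)\setminus\{x\}\}.\]
Thus $M$ fails the strong distance-edge-monitoring condition.

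There is no serious obstacle here; the only small design choice is finding the right pair $(E_1,E_2)$ in the reverse direction. An attempt using small sets such as $\{e\}$ versus $\{e'\}$ would be delicate because single-edge removals can cause complicated distance changes that a dem set $M$ is actually required to detect. The cleaner route is to go to the opposite extreme and remove essentially all edges, so that each $x\in M$ becomes isolated regardless of whether $e$ is kept or not; the presence or absence of the single edge $uv$ between two non-probe vertices is then completely invisible to $M$.
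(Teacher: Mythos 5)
Your proof is correct and follows essentially the same route as the paper's: the forward direction is identical (an edge of the symmetric difference with an endpoint in $M$ gives a pair $(u,v)$ lying in exactly one of $P(M,E_1)$, $P(M,E_2)$), and the reverse direction uses the same key trick of exhibiting two edge sets differing only in the uncovered edge $uv$ that look identical from every probe in $M$. The only cosmetic difference is your choice of witness sets ($E(G)$ versus $E(G)\setminus\{uv\}$, rather than the paper's set of all edges incident with $u$ or $v$ versus that set minus $uv$); both isolate $uv$ from $M$ and work equally well.
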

\begin{proof}
  First, assume that $M$ is a vertex cover. For an edge $uv$ (assume without loss of generality that $u\in M$), we have $(u,v)\in P(M,E)$ if and only if $uv\in E$. For two distinct sets $E_1,E_2$ of edges of $G$, we have an edge $e=uv$ in the symmetric difference $E_1\ominus E_2$ and thus one of the pairs $(u,v)$ or $(v,u)$ belongs to exactly one of $P(M,E_1)$ and $P(M,E_2)$. This implies that $M$ is a strong distance-edge-monitoring set.

  Conversely, let $M$ be a strong distance-edge-monitoring set of $G$. Assume for a contradiction that it is not a vertex cover of $G$, so, there is an edge $uv$ with $\{u,v\}\cap M=\emptyset$. Let $E$ be the set of edges that are incident with $u$ or with $v$ (or both), and $E'=E\setminus\{uv\}$. Then, for any two vertices $x,y$ with $x\notin\{u,v\}$, we have $d_{G-E}(x,y)=d_{G-E'}(x,y)$. In particular, this is true when $x\in M$. Thus, we have $P(M,E)=P(M,E')$, a contradiction.
\end{proof}

Nevertheless, it could remain interesting to study an intermediate concept, where we wish to monitor all sets of edges of size at most a given constant. This has been done for example in the context of the metric dimension in~\cite{MDsets}.







\end{document}